\documentclass[a4paper,11pt]{article}
\usepackage[margin=1in]{geometry}

\usepackage[linesnumbered,vlined]{algorithm2e}
\usepackage{amsmath,amssymb,amsthm,thmtools} %,thm-restate}
\usepackage{cite}
\usepackage{mathtools} %for PairedDelimiter, middle, figures?
\usepackage{enumerate}
\usepackage{hyperref}
\usepackage{booktabs} %for nicer tables

\newcommand{\calC}{\ensuremath{\mathcal{C}}}

\newcommand{\E}{\mathop{\mathbb{E}}}

\newcommand{\bbN}{\mathop{\mathbb{N}}}

\newcommand{\size}[1]{\ensuremath{\left|#1\right|}}
\newcommand{\set}[1]{\left\{ #1 \right\}}

\newcommand{\cgst}{\textsc{congest}}
\newcommand{\lcl}{\textsc{local}}
\newcommand{\sixap}{Algorithm~${\tt 6AP}$}

\DeclareMathOperator{\sent}{\textnormal{\texttt{sent}}}
\DeclareMathOperator{\false}{\textnormal{\texttt{FALSE}}}
\DeclareMathOperator{\true}{\textnormal{\texttt{TRUE}}}

\newcommand{\proxlv}{{\textnormal{\texttt{PL}}}^\ast_v}
\newcommand{\pathmv}{{\textnormal{\texttt{PM}}}^\ast_v}
\DeclareMathOperator{\algprox}{\textnormal{\texttt{PL}}}
\DeclareMathOperator{\algpath}{\textnormal{\texttt{PM}}}

\newcommand{\remove}[1]{}

\declaretheorem{theorem}
\declaretheorem[numberlike=theorem]{lemma}

\declaretheorem[numberlike=theorem]{definition}

\newtheorem*{theorem*}{Theorem}

\newenvironment{theorem-repeat}[1]{\begin{trivlist}
\item[\hspace{\labelsep}{\bf\noindent Theorem \ref{#1}. }]\em }%
{\end{trivlist}}
\newenvironment{lemma-repeat}[1]{\begin{trivlist}
\item[\hspace{\labelsep}{\bf\noindent Lemma \ref{#1}. }]\em }%
{\end{trivlist}}

\title{The Sparsest Additive Spanner via Multiple Weighted BFS Trees}

\author{Keren Censor-Hillel\footnotemark[3]
	\and Ami Paz\footnotemark[4]
	\and Noam Ravid\footnotemark[3]}
\date{}

\footnotetext[3]{Department of Computer Science, Technion, Israel.\\ \texttt{\{ckeren,noamrvd\}@cs.technion.ac.il}.}
\footnotetext[4]{Faculty of Computer Science, University of Vienna, Austria.
\texttt{ami.paz@univie.ac.at}.}

\usepackage{xcolor}

%\subjclass{Theory of computation---Distributed computing models;
%	Theory of computation---Sparsification and spanners;
%	Theory of computation---Shortest paths}
%
%\keywords{Distributed graph algorithms,
%congest model,
%weighted BFS trees,
%additive spanners}%mandatory

%%%%%%%%%%%%%%%%%%%%%%%%%%%%%%%%%%%%%%%%%%%%%%%%%%
\begin{document}
%%%%%%%%%%%%%%%%%%%%%%%%%%%%%%%%%%%%%%%%%%%%%%%%%%

\maketitle

\begin{abstract}
Spanners are fundamental graph structures that sparsify graphs at the cost of small stretch.
In particular, in recent years,
many sequential algorithms constructing additive all-pairs spanners were designed, providing very sparse small-stretch subgraphs.
Remarkably, it was then shown that the known $(+6)$-spanner constructions are essentially the sparsest possible, that is, larger additive stretch cannot guarantee a sparser spanner, which brought the stretch-sparsity trade-off to its limit.
Distributed constructions of spanners are also abundant. However, for additive spanners, while there were algorithms constructing $(+2)$ and $(+4)$-all-pairs spanners,
the sparsest case of $(+6)$-spanners remained elusive.

We remedy this by designing a new sequential algorithm for constructing a $(+6)$-spanner
with the essentially-optimal sparsity of $\tilde{O}(n^{4/3})$ edges.
We then show a distributed implementation of our algorithm,
answering an open problem in~\cite{Censor-HillelKP18}.

A main ingredient in our distributed algorithm is an efficient construction of multiple weighted BFS trees.
A weighted BFS tree is a BFS tree in a weighted graph,
that consists of the lightest among all shortest paths from the root to each node.
We present a distributed algorithm in the CONGEST model,
that constructs multiple weighted BFS trees in $|S|+D-1$ rounds,
where $S$ is the set of sources and $D$ is the diameter of the network graph.

$ $

\noindent\textbf{Keywords:}
Distributed graph algorithms,
congest model,
weighted BFS trees,
additive spanners

\end{abstract}

\remove{---- text only abstract:
	
Spanners are fundamental graph structures that sparsify graphs at the cost of small stretch. In particular, in recent years, many sequential algorithms constructing additive all-pairs spanners were designed, providing very sparse small-stretch subgraphs. Remarkably, it was then shown that the known (+6)-spanner constructions are essentially the sparsest possible, that is, a larger additive stretch cannot guarantee a sparser spanner, which brought the stretch-sparsity trade-off to its limit. Distributed constructions of spanners are also abundant. However, for additive spanners, while there were algorithms constructing (+2) and (+4)-all-pairs spanners, the sparsest case of (+6)-spanners remained elusive.

We remedy this by designing a new sequential algorithm for constructing a (+6)-spanner with the essentially-optimal sparsity of roughly O(n^{4/3}) edges. We then show a distributed implementation of our algorithm, answering an open problem in [Censor-Hillel et al., Distributed Computing 2018].

A main ingredient in our distributed algorithm is an efficient construction of multiple weighted BFS trees. A weighted BFS tree is a BFS tree in a weighted graph, that consists of the lightest among all shortest paths from the root to each node. We present a distributed algorithm in the CONGEST model, that constructs multiple weighted BFS trees in |S|+D-1 rounds, where S is the set of sources and D is the diameter of the network graph.

----}

\newpage

\section{Introduction}
A spanner of a graph $G$ is a spanning subgraph $H$ of $G$
that approximately preserves distances.
Spanners find many applications in distributed computing~\cite{Chechik13a,Censor-HillelHKM12,PelegU89a,PelegU89b,ThorupZ01},
and thus their distributed construction is the center of many research papers.
We focus on spanners that approximately preserve
distances between all pairs of nodes,
and where the stretch is only by an additive factor (\emph{purely-additive all-pairs} spanners).

Out of the abundant research on distributed constructions of spanners, only two papers discuss the construction of \emph{purely additive} spanners in the \cgst{} model:
the construction of $(+2)$-spanners is discussed in~\cite{LenzenP13},
and the construction of $(+4)$-spanners and $(+8)$-spanners in~\cite{Censor-HillelKP18},
along with other types of additive spanners and lower bounds.
However, the distributed construction of $(+6)$-spanners remained elusive,
stated explicitly as an open question in~\cite{Censor-HillelKP18}.
This is especially important since additive factors greater then $6$ cannot yield essentially sparser spanners~\cite{AbboudB17}.

In this paper,
we give a distributed algorithm for constructing a $(+6)$-spanner,
with an optimal number of edges up to sub-polynomial factors;
our spanner is even sparser than the $(+8)$-spanner presented in~\cite{Censor-HillelKP18}.
Several sequential algorithms building $(+6)$-spanners were presented,
but none of them seems to be appropriate for a distributed setting.
Thus, to achieve our result we also present a new, simple sequential algorithm for constructing $(+6)$-spanners,
a result that could be of independent interest.

As a key ingredient, we provide a distributed construction of
\emph{multiple weighted BFS trees}.
Constructing a breadth-first search (BFS) tree is a central task in many computational settings. In the classic synchronous distributed setting, constructing a BFS tree from a given source is straightforward. Due to its importance, this task has received much attention in additional distributed settings, such as the asynchronous setting (see, e.g.,~\cite{Peleg_2000} and references therein).
Moreover, at the heart of many distributed applications lies a graph structure that represents the edges of \emph{multiple} BFS trees~\cite{HolzerW12,LenzenPP19}, which are rooted at the nodes of a given subset $S\subseteq V$, where $G=(V,E)$ is the underlying communication graph. Such a structure is used in
distance computation and estimation~%
\cite{HolzerW12,HolzerPRW14,LenzenPP19},
routing table construction~\cite{LenzenPP19},
spanner construction~\cite{Censor-HillelKP18,LenzenP13,LenzenPP19}, and more.

When the bandwidth is limited, constructing multiple BFS trees efficiently is a non-trivial task. Indeed, distributed constructions of multiple BFS trees in the \cgst{} model~\cite{Peleg_2000}, where in each round of communication every node can send $O(\log{n})$-bit messages to each of its neighbors, have been given in~\cite{HolzerW12,LenzenPP19}. They showed that it is possible to build BFS trees from a set of sources $S$ in $O(|S|+D)$ rounds, where $D$ is the diameter of the graph $G$;
it is easy to show that this is asymptotically tight.

In some cases, different edges of the graph may have different attributes, which can be represented using edge weights. The existence of edge weights has been extensively studied in various tasks,
such as finding or approximating lightest paths~%
\cite{Elkin17,Nanongkai14,HenzingerKN16,ElkinN16,LenzenPP19,HuangNS17,AgarwalRKP18,GhaffariL18,BernsteinN19,AgarwalR20},
finding a minimum spanning tree (MST) in the graph~\cite{Awerbuch87,GallagerHS83,SarmaHKKNPPW12},
% Elkin_2006,GhaffariP16,HegemanPPSS15,LotkerPPP05,PelegR99,
finding a maximum matching~\cite{LotkerPP15,SarmaHKKNPPW12},
%WattenhoferW04,LotkerPR09
and more.
However, as far as we are aware, no study addresses the problem of constructing multiple weighted BFS (WBFS) trees, where the goal is not to find the lightest paths from the sources to the nodes, but rather the \emph{lightest shortest paths}. That is, the path in a WBFS tree from the source $s$ to a node $v$ is the lightest among all shortest paths from $s$ to $v$ in $G$.

Thus, we provide an algorithm that constructs multiple WBFS trees from a set of source nodes $S$ in the \cgst{} model. Our algorithm completes in $|S|+D-1$ rounds, which implies that no overhead is needed for incorporating the existence of weights.

While our multiple WBFS algorithm can be used for graphs that have initial edge weights, we actually use it to construct $(+6)$-spanners in unweighted graphs, to which we artificially add edge weights to distinguish ``desired'' edges from ``undesired'' ones.
More generally, this algorithm can be used to find \emph{consistent shortest paths} in an unweighted graph:
a set of shortest paths is consistent if, given four nodes $s,t,u$ and $v$ such that $u$ and $v$ lie on the shortest $(s,t)$-path, the shortest $(u,v)$-path is a subpath of the shortest $(s,t)$-path.
Such structures have many applications, but achieving them is non-trivial, especially in the distributed setting.
Our WBFS algorithm can be used for this goal, by assigning the edges of an unweighted graph random weights of polynomial size. 
The \emph{isolation lemma}~\cite{MulmuleyVV87} then guarantees that with high probability, the lightest shortest paths derived from this weight function will constitute a set of consistent shortest paths.

\subsection{Our contribution}
At a high level, our approach for building multiple WBFS trees
is to generalize the algorithm of Lenzen et al.~\cite{LenzenPP19}
in order to handle weights.
In~\cite{LenzenPP19}, the messages are pairs consisting of a source node and a distance,
which are prioritized by the distance traversed so far.
When incorporating weights into this framework it makes sense to use triplets instead of pairs,
where each triplet also contains the weight of the respective path. However, it may be that a node $v$ needs to send multiple messages
that correspond to the same source and the same distance but contain different weights,
since congestion over edges may cause the respective messages to arrive at $v$ in different rounds and,
in the worst case, in a decreasing order of weights.
The challenge in generalizing this framework therefore lies in guaranteeing that despite the need to consider weights,
we can carefully choose a total order to prioritize triplets,
such that not too many messages need to be sent,
allowing us to handle congestion.
Our construction and its proof appear in Section~\ref{sec: wbfs algorithm}, giving the following.

\newcommand{\ThmWBFS}{
Given a weighted graph $G=(V,E,w)$ and a set of nodes $S \subseteq V$,
there exists an algorithm for the \cgst{} model
that constructs a WBFS tree rooted at $s$, for every $s\in S$,
in $\size{S} +D-1$ rounds.
}
\begin{theorem}
\label{thm: weighted bfs}
	\ThmWBFS
\end{theorem}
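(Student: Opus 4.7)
The plan is to generalize the multiple-BFS construction of Lenzen, Patt-Shamir, and Peleg by carrying, in each message, not only a source and a hop count but also the accumulated path weight. Each node $v$ maintains, for every source $s$ it has already heard about, the lexicographically smallest pair $(d^*_v(s), w^*_v(s))$ discovered so far, where $d^*_v(s)$ is the shortest hop distance from $s$ and $w^*_v(s)$ is the smallest weight over all paths achieving this hop distance. For every outgoing edge $(v,u')$ the node maintains a priority queue of triplets $(s,d,w)$ to be sent, ordered by a total order $\prec$ on triplets (the natural candidate being lexicographic on $(d,w,s)$), and at every round $v$ transmits on each edge the highest-priority triplet from the corresponding queue. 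Upon receipt of $(s,d,w)$ from a neighbor $u$, if $(d,w)$ improves $v$'s current record for $s$, then $v$ updates its table and enqueues the relay triplet $(s, d+1, w+w(vu'))$ on every outgoing edge.

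Correctness follows by induction on the hop distance from $s$. The base case $d=0$ is trivial. For the inductive step, every node $v$ with $d^*(s,v)=d$ has a neighbor $u$ lying on some lightest shortest $s$-to-$v$ path, with $d^*(s,u)=d-1$ and $w^*(s,v)=w^*(s,u)+w(uv)$; by the inductive hypothesis $u$ eventually identifies its optimal pair and the corresponding relay triplet reaches $v$, who then records the correct value.

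For the round complexity, the plan is to show that each edge transports at most $|S|$ triplets throughout the execution, after which a pipelining argument in the spirit of Lenzen, Patt-Shamir, and Peleg yields the bound $|S|+D-1$. That pipelining lemma says, informally, that the triplet of global rank $k$ under $\prec$ destined for a node at hop distance $d$ arrives by round $k+d-1$; summing one triplet per source per edge then delivers the stated bound.

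The main obstacle, as the authors emphasize, is exactly the phenomenon that a node $v$ may legitimately receive, for the same source $s$ and the same hop distance, triplets in \emph{decreasing} order of weight. For example, a neighbor $u_1$ with small $w^*(s,u_1)$ but large $w(u_1v)$ may forward its heavier relay triplet before a neighbor $u_2$ with larger $w^*(s,u_2)$ but tiny $w(u_2v)$ does, even though the triplet relayed by $u_2$ is the lighter one at $v$. A naive scheduling would then emit multiple relay triplets for the same source on the same outgoing edge, breaking the $|S|$-per-edge budget. The technical heart of the proof is to choose $\prec$ and the queueing discipline so that, via an induction coupling the global priority order with the round number, the lex-minimum triplet for each source is already the first one to cross each relevant edge, so that no later arrival of a better triplet forces a retransmission on that edge. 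Once this invariant is secured, the standard pipelining analysis of Lenzen, Patt-Shamir, and Peleg lifts to the weighted setting and yields the claimed $|S|+D-1$ rounds.
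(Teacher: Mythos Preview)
Your plan correctly isolates the central difficulty---triplets for a fixed source and hop count can reach a node in decreasing order of weight---but the resolution you sketch does not go through. You aim to show that each edge carries at most $|S|$ triplets by arranging that ``the lex-minimum triplet for each source is already the first one to cross each relevant edge.'' This invariant is asserted, not proved, and it is in fact false for the lexicographic order on $(d,w,s)$ you propose. Take $s$ adjacent to $u_1,u_2$ with $w(su_1)=0$, $w(su_2)=1$; both $u_i$ adjacent to $v$ with $w(u_1v)=10$, $w(u_2v)=0$; a second source $t$ adjacent to both $u_2$ and $v$ with weight~$0$; and a node $x$ adjacent to $v$. Under $(d,w,s)$, node $u_2$ forwards its $t$-triplet before its $s$-triplet, so after round~$2$ node $v$ holds only the heavy value $(s,2,10)$ obtained via $u_1$; since $v$ already received $(t,1,0)$ directly from $t$ and has nothing else pending, in round~$3$ it forwards $(s,2,10)$ to $x$, and only at the end of that round receives the correct $(s,2,1)$ from $u_2$, which it must forward in round~$4$. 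That is a genuine retransmission, and no local discipline at $v$ can avoid it without information $v$ does not yet possess. (The paper also remarks that the per-edge-queue approach is already problematic in the unweighted case.)

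The paper takes a different route and does \emph{not} attempt to cap the number of messages per source. It broadcasts one triplet per round to all neighbors, orders triplets only by $(d,\text{source id})$---the weight plays no role in the priority---and explicitly allows a node to emit several triplets for the same source as better weights arrive. The round bound comes from an index argument rather than a message count: writing $\ell_v^{(r)}(d_s,s,w_s)$ for the position of a triplet in $v$'s current list (which has at most $|S|$ entries, one per source), one shows that this index is non-decreasing in $r$ and does not decrease when a triplet is relayed across an edge, and hence that any triplet inserted in round $r$ satisfies $d_s+\ell_v^{(r+1)}>r$. Since the eventually correct triplet for every source has $d_s\le D$ and index at most $|S|$, it is in place by round $|S|+D-1$. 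In short, the pipelining lemma you invoke is itself the whole argument and applies directly; the $|S|$-messages-per-edge bound you tried to establish as a prerequisite is both unnecessary and unattainable, and the retransmissions are harmless precisely because they never push the index of the correct triplet above~$|S|$.
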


Our multiple WBFS trees construction can have many applications, as it allows consistent tie-breaking. Here, we present one such application: pinning down the question of constructing
$(+6)$-spanners in the \cgst{} model.
%The importance of our multiple WBFS trees construction lies in our ability to use it for pinning down the question of constructing
%$(+6)$-spanners in the \cgst{} model.
The construction of additive spanners in the \cgst{} model was studied beforehand~\cite{LenzenP13,Censor-HillelKP18},
but the $+6$ case remained unresolved, for reasons we describe below.
Naturally, the quality of a spanner is measured by its sparsity,
which is the motivation for allowing some stretch in the distances to begin with,
and different spanners present different tradeoffs between stretch and sparsity.
The properties of our $(+6)$-spanner construction algorithm are summarized in the following theorem.\footnote{We use w.h.p.\ to indicate a probability that is at least $1-1/n^c$ for some constant $c\geq 1$ of choice.}

\newcommand{\ThmSpanner}
{
	There exists an algorithm for the \cgst{} model
	that constructs a $(+6)$-spanner with $O\left(n^{4/3} \log^{4/3} n\right)$ edges
	in $O\left(\frac{n^{2/3}}{\log^{1/3}n} +D\right)$ rounds
	and succeeds w.h.p.
}

\begin{theorem}
	\label{thm: 6ap}
	\ThmSpanner
\end{theorem}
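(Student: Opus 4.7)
The plan is to implement, in the CONGEST model, the sequential $(+6)$-spanner algorithm presented earlier in the paper, using Theorem~\ref{thm: weighted bfs} as the main primitive. First, set a degree threshold $\delta = \Theta(n^{1/3}\log^{4/3} n)$ and sample a set $S \subseteq V$ of cluster centers by including each vertex independently with probability $p = \Theta(1/(n^{1/3}\log^{1/3} n))$. A standard Chernoff bound shows that w.h.p.\ $|S| = O(n^{2/3}/\log^{1/3} n)$ and every vertex of degree at least $\delta$ has a sampled neighbor. Each vertex performs its sample locally and announces the outcome to its neighbors in one round. Add to the spanner every edge incident to a vertex of degree less than $\delta$; this is purely local and contributes $O(n\delta) = O(n^{4/3}\log^{4/3} n)$ edges.

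Next, artificially weight the edges of $G$ so that ``desired'' edges (for example, edges incident to cluster centers) receive lower weight than ``undesired'' ones, and break ties by assigning each edge an independent random value of polynomial magnitude. By the isolation lemma, w.h.p.\ every pair of vertices has a unique lightest shortest path, so the family of lightest shortest paths from the sampled centers is \emph{consistent} in the sense defined in Section~1. Now invoke the multiple WBFS construction of Theorem~\ref{thm: weighted bfs} with $S$ as the source set: in $|S|+D-1 = O(n^{2/3}/\log^{1/3} n + D)$ rounds, every vertex learns its parent in the WBFS tree rooted at each $s \in S$, which uniquely identifies the consistent shortest path the sequential algorithm would follow.

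Finally, each vertex decides locally which of its incident edges to add to the spanner, using only the parent pointers it collected during the WBFS phase and the identities of its sampled neighbors. The sparsity bound of $O(n^{4/3}\log^{4/3} n)$ and the $+6$ stretch bound then follow verbatim from the analysis of the sequential algorithm applied to the family of paths produced by the WBFS trees, and the round count is dominated by the WBFS phase. The main obstacle is precisely this last step: turning the sequential algorithm's global path decisions into local ones. Consistency resolves it, since any edge on the chosen $(s,t)$-path is witnessed by both of its endpoints in exactly the same way through the WBFS tree of $s$; therefore local decisions at the two endpoints necessarily agree, and the union of these local choices reconstructs the same spanner the sequential algorithm would have produced.
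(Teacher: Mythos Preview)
Your proposal has a genuine gap in the path-buying step. The sequential algorithm \sixap{} does \emph{not} simply add the WBFS trees rooted at the sampled centers; it iterates over geometrically increasing values of $k$, samples a subset $S_k\subseteq\calC$ for each $k$, and for every pair $(c_i,c_j)\in\calC\times S_k$ adds \emph{one} carefully chosen path, namely a shortest path from $c_i$ to some $v\in C_j$ that minimizes $|P_v\setminus H_0|$, and only if this quantity is below $2k$. The sparsity bound in Lemma~\ref{lemma: 6ap size} depends on exactly this: at most $|\calC|\cdot|S_k|$ paths are bought per value of $k$, each contributing fewer than $2k$ new edges. Your sentence ``each vertex decides locally which of its incident edges to add \ldots\ using only the parent pointers'' does not explain how an internal node of a WBFS tree knows whether the particular path through it was the one selected by $c_j$; consistency of the path family says nothing about this selection, which is made by $c_j$ after aggregating information from all of its cluster members. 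In the paper this is handled by an explicit two-stage protocol after the WBFS construction: every clustered node first pipelines $|\calC|$ messages to its center reporting $(\delta_G(c_i,v),|P_v\setminus H_0|)$ for every $c_i$, the center performs the selection, and then ``buy $c_i$'' messages are propagated back up the relevant tree by replaying the WBFS schedule in reverse. None of this is local in your sense, and it is what actually costs the $O(n^{2/3}/\log^{1/3}n + D)$ rounds beyond the WBFS call.

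A secondary issue is your edge weighting. The paper assigns weight $0$ to edges of $H_0$ (cluster edges \emph{and} all edges incident to unclustered nodes) and weight $1$ to the rest, so that a WBFS tree rooted at $c_i$ produces, for every $v$, a shortest path minimizing $|P_v\setminus H_0|$; this is precisely the quantity step~3(b) of \sixap{} needs. Weighting ``edges incident to cluster centers'' and adding random perturbations for the isolation lemma does not yield this minimization, so the stretch argument of Lemma~\ref{lemma: 6ap stretch} (which relies on bounding $|\sigma\setminus H_0|$ by the number of clusters hit) would not go through verbatim as you claim.
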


Previous distributed algorithms for spanners similar to ours,
i.e., purely additive all-pairs spanners,
construct a $(+2)$-spanner with $\tilde O(n^{3/2})$ edges in $\tilde O(n^{1/2}+D)$ rounds~\cite{LenzenP13},
a $(+4)$-spanner with $\tilde O(n^{7/5})$ edges in $\tilde O(n^{3/5}+D)$ rounds~\cite{Censor-HillelKP18},
and a $(+8)$-spanner with $\tilde O(n^{15/11})$ edges in $\tilde O(n^{7/11}+D)$ rounds~\cite{Censor-HillelKP18}.
Hence, our algorithm is currently the best non-trivial spanner construction algorithm in terms of density, sparser even than the previous $(+8)$-spanner.
The parameters of our algorithm are tuned to achieve the best sparsity possible, and interestingly, one can change the algorithm and achieve a worst sparsity of $\tilde O(n^{15/11})$ edges in $\tilde O(n^{7/11}+D)$ rounds. These are the same parameters of the $(+8)$-spanner algorithm, but with a better stretch.
The option of getting even sparser spanners by allowing more stretch was essentially ruled out~\cite{AbboudB17},
while the question of
improving the running time remains open for all stretch parameters.

%\begin{table*} %[htbp]
%	%\small
%	\centering
%	\begin{tabular*}{\linewidth}{@{}l@{\extracolsep{\fill}}l@{}l@{}}
%		\toprule
%		Stretch       &Number of Edges  & Construction Time \\
%		\midrule
%		$(+2)$~\cite{Censor-HillelKP18}  &
%		$O\left(n^{3/2} \log^{1/2} n\right)$  &
%		 $O( n^{1/2}\log^{1/2}n + D)$\\
%		
%		$(+4)$~\cite{Censor-HillelKP18}  &
%		$O\left(n^{7/5} \log^{4/5} n\right)$  &
%		 $O(n^{3/5} \log^{1/5} n + D)$\\
%		
%		$(+6)$ (this paper) &
%		$O\left(n^{4/3}\log^{4/3}{n}\right)$ &
%		$O\left(n^{2/3}/\log^{1/3}{n} +D\right)$ \\
%
%		$(+8)$~\cite{Censor-HillelKP18}  &
%		$O\left(n^{15/11}\log^{10/11}{n}\right)$ &
%		$O(n^{7/11}\log^{1/11}{n} + D)$\\
%		\bottomrule
%	\end{tabular*}
%	\caption{The number of edges and constructions time for different stretch parameters.}
%	[[[AMI: Not sure this makes sense. The previous paper allows to parameterize number of edges vs. construction time, ours can also allow this, so not clear there is a point in comparing]]]
%	\label{tab: results}
%\end{table*}

\subsection{Other spanner construction algorithms}
Previous distributed spanner construction algorithms
all build upon known sequential algorithms,
and present distributed implementations of them,
or of a slight variant of them~\cite{LenzenP13,Censor-HillelKP18}.
For example, many sequential algorithms start in a clustering phase,
where stars around high-degree nodes are added to the spanner one by one.
Implementing this directly in the distributed setting will take too long;
instead, we use a classical approach of choosing cluster centers at random,
which yields almost as good results, and can be implemented in a constant time.
Similar methods are used for implementing other parts of the construction.
However, the approach of finding a distributed implementation for a sequential algorithm fails for all known $(+6)$-spanner algorithms, as described next.
Thus, we introduce a new sequential algorithm for the problem,
and then present its distributed implementation.

There are three known approaches for the design of sequential $(+6)$-spanner algorithms.
The first, presented by Baswana et al.~\cite{BaswanaKMP10},
is based on measuring the quality of paths in terms of \emph{cost} and \emph{value},
and adding to the spanner only paths which are ``affordable''.
This approach was later extended by Kavitha~\cite{Kavitha17} to other families of additive spanners.
The second approach,
presented by Woodruff~\cite{Woodruff_2010},
uses a subroutine that finds almost-shortest paths between pairs of nodes,
and obtains a faster algorithm at the expense of a slightly worst sparsity guarantee.
The third approach, presented by Knudsen~\cite{Knudsen14},
is based on repeatedly going over pairs of nodes,
and adding a shortest path between a pair of nodes to the spanner
if their current distance in the spanner is too large.

Unfortunately, direct implementation in the \cgst{}
model of the known sequential algorithms is highly inefficient.
We are not aware of fast distributed algorithms that allow the computation
of the cost and value of paths needed for the algorithm of~\cite{BaswanaKMP10}.
Similarly, for~\cite{Woodruff_2010}, the almost-shortest paths subroutine seems too costly
for the \cgst{} model.
The algorithm of~\cite{Knudsen14} needs repeated updates of the
distances in the spanner between pairs of nodes
after every addition of a path to it,
which is a sequential process in essence,
and thus we do not find it suitable for an efficient distributed implementation.

A different approach for the distributed construction of $(+6)$-spanners
could be to adapt a distributed algorithm with different stretch guarantees
to construct a $(+6)$-spanner.
This approach does not seem to work:
the distributed algorithms for constructing $(+2)$-spanners~\cite{LenzenP13}
and $(+4)$-spanner~\cite{Censor-HillelKP18}
are both very much tailored for achieving the desired stretch,
and it is not clear how to change them in order to construct sparser spanners with  higher stretch.
The $(+8)$-spanner construction algorithm~\cite{Censor-HillelKP18}
starts with clustering, and then constructs a $(+4)$-\emph{pairwise spanner}
between the cluster centers.
Replacing the $(+4)$-pairwise spanner by a $(+2)$-pairwise spanner will indeed yield a $(+6)$-all-pairs spanner, as desired.
However, even using the sparsest $(+2)$-pairwise spanners~\cite{Censor-HillelKP18,AbboudB16-1},
the resulting $(+6)$-spanner may have $\tilde{O}(n^{5/3})$ edges,
denser than our new $(+6)$-spanner
and than the known $(+8)$-spanner~\cite{Censor-HillelKP18}.

Thus, we start by presenting a new sequential algorithm for the construction of $(+6)$-spanners,
an algorithm that is more suitable for a distributed implementation,
and then discuss its distributed implementation.
Our construction starts with a clustering phase,
and then adds paths that minimize the number of additional edges that need to be added to the spanner.
%The crux of our construction
%is in assigning weights to edges that were not already picked and carefully applying our multiple WBFS construction.
To implement our construction in the \cgst{} model,
we assign weights to the edges and use our WBFS algorithm
to find shortest paths with as few edges as possible that are not yet in the spanner.
Note that although the graph and the spanner we construct for it are both unweighted,
the ability of our multiple WBFS algorithm to handle weights is crucial for our solution.

A $(+6)$-spanner must contain $n^{4/3}/2^{O(\sqrt{\log n})}$ edges~\cite{AbboudB17}.
The best sequential algorithms~\cite{Knudsen14,BaswanaKMP10} construct a spanner
with $O(n^{4/3})$ edges.
Our distributed algorithm constructs a spanner with $O(n^{4/3}\log^{4/3}n)$ edges,
which is slightly denser than optimal
but still sparser than the $O(n^{4/3}\log^{3}n)$ edges in the fast sequential construction of~\cite{Woodruff_2010}.

\subsection{Related work}
%Algorithms for the \cgst{} model that construct multiple (unweighted) BFS trees, rooted at a set of sources $S$,
%were suggested in~\cite{LenzenPP19} and~\cite{HolzerW12},
%running in $O(|S|+D)$ rounds.
%Both algorithms start the construction of all the BFS trees simultaneously,
%and proceed by transferring messages containing the source of a BFS tree
%and the distance the message has traversed so far.
%The algorithms differ in how they order message deliveries when several messages need to be sent over an edge at the same round.
%We base our multiple WBFS construction on the~\cite{LenzenPP19} algorithm, in which messages sent by a node are prioritized by the distance they traversed so far, with a preference to messages that traversed smaller distance.
%A variant of this algorithm, with a reduced message complexity, was recently suggested in~\cite{PontecorviR18}.
%%The algorithm from~\cite{HolzerW12}, which we cannot use for our construction~\cite{Hpersonal}, prioritizes messages by the identity of the root, and transmits a message only in one direction of each edge in each round.

A natural approach for building multiple (unweighted) BFS trees in the \cgst{} model is to start the construction simultaneously from all sources, sending pairs composed of a source id and the distance the message have traversed so far. The main challenge in this approach appears when multiple such pairs should be sent on one edge in a single round. In this case, the algorithm needs to locally prioritize the message, in a way that will not compromise the correctness of the constructed BFS trees.

One approach for prioritizing the messages is by source id, i.e., have a queue at each node sorted by source id. However, this alone does not work and might create incorrect BFS trees.
Attempts to overcome the problems of this approach by having a different queue for each outgoing edge~\cite{HolzerW12}
are also insufficient~\cite{HWpersonal}.

An elegant way to resolve this is to sort messages not by source id, but by the distance they traversed so far, with a preference to messages that traversed smaller distances.
Such an algorithm was suggested by Lenzen and Peleg~\cite{LenzenPP19},
finishing in $(\size{S} +D-1)$ rounds when executed form a set $S$ of sources.
We base our multiple WBFS construction algorithm on this algorithm.
A variant of the Lenzen-Peleg algorithm, with a reduced message complexity, was recently suggested in~\cite{PontecorviR18}.

%In the algorithm from~\cite{HolzerW12}, each node keeps a separate queue of messages to be sent to each of its neighbors, prioritized in an increasing order of root ids,
%and not sending a message back to its origin. 
%Moreover, when two endpoints of an edge exchange messages, they treat the message with minimum root id as successful, and discard the other message,
%which will be re-sent on this edge later.
%Using this technique, a message of low root id \emph{blocks} messages coming in the opposite direction with higher root id.
%
%However, the blocking does not influence equally on all messages,
%which could result in the generation of wrong trees.
%To see this, consider a triangle $(s_1,s_2,v)$, where $s_1$ and $s_2$ are sources with ids $1$ and $2$ respectively.
%The message from $s_2$ to $s_1$ will arrive in two rounds in both directions: 
%Through the edge $(s_2,s_1)$ it will be suppressed in the first round, and transmitted in the second;
%through $v$, it will arrive in the second round, since $v$ keeps separate queues for $s_1$ and $s_2$, and no message is delayed.
%Thus, $s_1$ may mistakenly choose $v$ as its parent in the tree rooted at $s_2$.
%In the appendix, we present a more complicated graph, where the construction of a wrong tree is not a result of an arbitrary choice of a parent, 
%but of messages on longer paths arriving strictly before those in shorter ones.

%\subsection{More related work}

Spanners were first introduced in 1989~\cite{PS89, PelegU89a}, and since then have been a topic for wide research due to their abundant applications. Prime examples for the need for sparse spanners can be found in synchronizing distributed networks~\cite{PelegU89a}, information dissemination~\cite{Censor-HillelHKM12}, compact routing schemes~\cite{Chechik13a, PelegU89b,ThorupZ01}, and more.

Distributed constructions of various spanners have been widely studied~\cite{LenzenP13,LenzenPP19,Pettie10,Censor-HillelKP18,
BaswanaKMP10, BaswanaS07, DerbelG08, DerbelGP07, DerbelGPV08, DubhashiMPRS05, E05, ElkinZ06, DerbelGPV09,ParterY18,GhaffariK18,Parter17,GrossmanP17,Censor-HillelD18,ElkinMatar19,ElkinN19,BarenboimEG18}. Lower bounds were given in~\cite{Pettie10,Censor-HillelKP18,AbboudCK16}.
However, obtaining an efficient and sparse ($+6$)-all-pairs spanner has remained an open question~\cite{Censor-HillelKP18}.

Several lower bounds for the time complexity of spanner construction in the \cgst{} model were presented in~\cite{Censor-HillelKP18},
but these are applicable only to pairwise spanners with a bounded number of pairs, and not to all-pairs spanners.
A lower bound from~\cite{Pettie10} states that the construction of a spanner with $\tilde O(n^{4/3})$ edges,
such as the one we build,
must take $\tilde\Omega(n^{3/8})$ rounds.
This lower bound does not take into account the bandwidth restrictions at all (it is proven for the \lcl{} model),
and so we believe that a higher lower bound for the \cgst{} model should apply,
but this is left as an intriguing open question.

\section{Preliminaries}
\label{sec:preliminaries}
All graphs in this work are simple, connected and undirected.
A graph can be unweighted, $G=(V,E)$,
or weighted $G=(V,E,w)$ with $w:E\to \set{0,\ldots, W}$,
in which case we assume $W\in \text{poly}(n)$.
Given a path $\rho$ in a weighted graph $G$,
we use $\size{\rho}$ to denote the \emph{length} of $\rho$,
which is the number of edges in it,
and $w(\rho)$ to denote the \emph{weight} of the path,
which is the sum of its edge-weights.
The \emph{distance} between two nodes $u,v$ in a graph $G$,
denoted $\delta_G(u,v)$,
is the minimum length of a path in $G$ connecting $u$ and $v$.
The \emph{diameter} of a graph (weighted or unweighted)
is $D=\max_{u,v\in V}\set{\delta_G(u,v)}$.

We consider the \cgst{} model of computation\cite{Peleg_2000},
where the nodes of a graph communicate synchronously by exchanging $O(\log n)$-bit
messages along the edges.
The goal is to distributively solve a problem
while minimizing the number of communication rounds.

\subparagraph{WBFS trees:}
We are interested in a weighted BFS tree, which consists of all \emph{lightest shortest paths} from the root, formally defined as follows.

\begin{definition}
	Given a connected, weighted graph $G=(V,E,w)$
	and a node $s\in V$,
	a \emph{weighted BFS tree (WBFS)} for $G$ rooted at $s$
	is a spanning tree $T_s$ of $G$
	satisfying the following properties:
	\begin{enumerate}[(i)]
	  \item For each $v\in V$, the path from $s$ to $v$ in $T$ is
	    a shortest path in $G$ between $s$ and $v$.
	  \item For each $v\in V$, no shortest path from $s$ to $v$ in $G$
	  is lighter than the path from $s$ to $v$ in~$T$.
	\end{enumerate}
\end{definition}

We emphasize that this is different than requiring a subgraph containing all \emph{lightest paths} from the root.
One may wonder if a WBFS tree always exists,
but this is easily evident by the following refinement of a (sequential) BFS search, returning a WBFS tree:
go over the nodes in an order of non-decreasing distances from the source $s$, starting with $w(s)=0$;
each node $v$ chooses as a parent a neighbor $u$ that was already processed and minimizes $w(v)=w(u)+w(u,v)$, and adds the edge $\{u,v\}$ to the tree.
Each node has a single parent, so this is indeed a tree; the node ordering guarantees that this is indeed a BFS tree, assuring~(i); and the parent choice guarantees the paths are lightest among the shortest, assuring~(ii).
In fact, this can be seen as an algorithm implementing a consistent tie-breaking strategy between paths of equal lengths, and thus, it is not surprising its output is a tree.

\subparagraph{Spanners:}
Given a graph $G=(V,E)$, a subgraph $H=(V,E')$ of $G$ is called an \emph{$(\alpha,\beta)$-spanner} if for every $u,v \in V$ it holds that
$\delta_H(u,v) \leq \alpha\delta_G(u,v) + \beta$.
The parameters $\alpha$ and $\beta$ are called the \emph{stretch parameters}.

When $\alpha=1$, such a spanner is called a \emph{purely additive spanner}. In this paper we focus on purely additive ($+6$)-spanners, i.e., $\alpha=1$ and $\beta=6$.

For completeness, we mention that when $\beta=0$, such a spanner is called a \emph{multiplicative spanner}. In addition, while sometimes the stretch parameters need to be guaranteed only for some subset of all the pairs of nodes of the graph (such as in \emph{pairwise spanners}), we emphasize that our construction provides the promise of a $+6$ stretch for \emph{all} pairs.

\section{Multiple Weighted BFS Trees}
\label{sec: wbfs algorithm}
In the \cgst{} model, the problem of finding a WBFS tree requires each node
to know its parent in the WBFS tree,
and the unweighted and weighted distances to the source within the tree.
This allows the node to send messages to the source node through the lightest among all shortest paths.
When there are multiple sources, each node should know the parent leading to each of the sources in $S$.

We define data structures for representing multiple WBFS trees.
Given a node $v \in V$, the \textit{$S$-proximity-list} (or \textit{proximity list} for short) of $v$, noted $\proxlv$,
is an ascending lexicographically ordered list of triples $(d(s,v),s,w(s,v))$,
where $d(s,v)$ and $w(s,v)$ are the length and weight of the path from $s$ to $v$ in $T_s$.
Two different triples are ordered such that $(d(s,v),s,w(s,v)) < (d(t,v),s,w(t,v))$
if $d(s,v) < d(t,v)$, or $d(s,v) = d(t,v)$ and $s < t$,
where $s$ and $t$ may be compared by any predefined order on the node identifiers.
Note that $T_s$ contains a single path from $s$ to $v$,
so $\proxlv$ cannot contain two triplets with $d(s,v) = d(t,v), s = t$ and $w(s,v)\neq w(t,v)$.

The \textit{$S$-path-map} (or \textit{path-map} for short) of $v$
is a mapping from each source $s \in S$ to the parent of $v$ in $T_s$,
noted by $\pathmv$.
The list $\pathmv$ is sorted with respect to the order of $\proxlv$,
such that the first records of $\pathmv$ belong to sources closest to $v$.

Algorithm~\ref{alg: parallel weighted bfs},
which constructs multiple WBFS trees from a set $S$ in the \cgst{} model,
is based on carefully extending the distributed Bellman-Ford-based algorithm of Lenzen et al.~\cite{LenzenPP19}.
The heart of the algorithm is a loop (Line~\ref{line: for loop}),
and each iteration of it takes a single round in the \cgst{} model.
We show that $\size{S} +D-1$ iterations of the loop suffice in order to construct the desired WBFS trees.

The algorithm builds the WBFS trees by gradually updating the proximity list
and the path map of each node.
Each round is composed of two phases:
updating the neighbors about changes in the proximity list,
and receiving updates from other nodes.
The path map is only used by the current node,
and therefore changes to it are not sent.

Ideally, each node would update its neighbors
regarding all the changes made to its proximity list.
However, due to bandwidth restrictions,
a node cannot send the entire list in each round.
Therefore, at each round each node sends to all of its neighbors the
lexicographically smallest triplet in its proximity list that it has not yet
sent,
while maintaining a record noting which triplets have been sent and which are waiting.
Each triplet is only sent once, though a node may send multiple triplets
regarding a single source.

A node uses the messages received in the current round in order to update its
proximity list and path map for the next round.
A triplet $(d_s, s, w_s)$ received by a node $v$ from a neighbor $u$
represents the length $d_s$ and weight $w_s$ of some path $\rho$ from
$s$ to $u$ in the graph. The node $v$ then considers the extended path
$\rho' = \rho \circ v$ from $s$ to $v$,
compares it to its currently known best path from $s$ to $v$,
and updates the proximity list and path map in
case a shorter path has been found, or a lighter path with the same length.

\RestyleAlgo{boxruled}
\LinesNumbered
\begin{algorithm}[t]
	\DontPrintSemicolon
	\caption{Weighted distributed Bellman-Ford algorithm
            for node $v$}
	$L_v \gets ()$ \\
	\label{alg: parallel weighted bfs} %%%% An ugly manually tweak
    \For{$s \in S$}{
    	$\algpath_v(s) \gets \bot $
    }
    \If{$v\in S$}{
        $\algprox_v\gets((0,v,0))$\\
        $\sent_v(0,v,0) \gets \false$ \tcc*{A variable marking sent triplets}
    }
    \For{$\size{S}+D-1$ \text{rounds}}{
    	\label{line: for loop}
        \If{$\exists (d_s,s,w_s) \in \algprox_v \mbox{ such that }
            \sent_v(d_s,s,w_s) = \false$}{
            $(d_s,s,w_s)\gets \min
            \set{(d_{t},t,w_{t}) \in \algprox_v \mbox{ such that }
                \sent_v(d_{t},t,w_{t}) = \false}$\\
            send $(d_s,s,w_s)$ to all neighbors\\
            $\sent_v(d_s,s,w_s)\gets\true$
        }
        \For{received $(d_s,s,w_s)$ from $u \in V$}{
            $d_s \gets d_s + 1$\\
            $w_s \gets w_s + w(u,v)$\\
            \If{$\nexists (d'_s,s,w'_s) \in \algprox_v \mbox{ such that }
                (d'_s < d_s \text{ or }
                (d'_s = d_s \text{ and } w'_s < w_s))$}{
                $\algprox_v \gets \algprox_v \setminus \set{(\cdot, s, \cdot)}$\\
                $\algprox_v \gets \algprox_v \cup \set{(d_s, s, w_s)}$\\
                $\algpath_v(s) \gets u$ \\
                $\sent_v(d_s, s, w_s)\gets\false$
            }
%             \If{$\nexists (d'_s,s,w'_s) \in L_v:
%                 (d'_s\leq d_s+1 \text{ or }
%                 (d'_s = d_s+1 \text{ and } w'_s \leq w_s +w(v,s)))$}{
%                 $L_v \gets L_v \setminus \set{(\cdot, s, \cdot)}$\\
%                 $L_v \gets L_v \cup \set{(d_s+1, s, w_s+w(v,s))}$\\
%                 $\sent_v(d_s+1, s, w_s+w(v,s))\gets\false$
%             }
        }
    }
\end{algorithm}

To prove correctness,
we generalize the proof of~\cite{LenzenPP19} to handle weights,
and show that our algorithm solves the
\emph{weighted $(S,d,k)$-detection} problem:
each node should learn which are the sources from $S$ closest to it,
but at most $k$ of them and only up to distance $d$.
This is formally defined as follows.

\begin{definition}
\label{def:weighted sdk detection}
Given a weighted graph $G=(V,E,w)$,
a subset $S \subseteq V$ of source nodes,
and a node $v \in V$,
let $\proxlv$ denote the S-proximity-list
and let $\pathmv$ denote the path map of the node $v$.
The \emph{weighted $(S,d,k)$-detection problem} requires
that each node $v \in V$ learns the first $\min\set{k,\lambda_v^d}$ entries of $\proxlv$ and $\pathmv$,
where $\lambda_v^d$ is the number of sources $s \in S$ such that $d(s,v) \leq d$.
\end{definition}

Given a node $v$,
$\algprox_v$ is a variable in Algorithm~\ref{alg: parallel weighted bfs}
holding the proximity list of $v$,
and we denote by $\algprox_v^{(r)}$ the state of the list $\algprox_v$
at the beginning of round $r$ of the algorithm,
and by $\algprox_v^{(\infty)}$ the value of $\algprox_v$ at the end of the algorithm.
Recall that $\proxlv$ is the true proximity list,
so our goal is proving $\algprox_v^{(\infty)} = \proxlv$,
i.e., proving that the algorithm obtains the correct values of the proximity list.

We use similar notations for the path map $\pathmv$.
Since the records of $\algpath_v$ are updated under the same conditions as the records of $\algprox_v$,
the correctness of $\algpath_v$ at the end of the algorithm with respect to $\pathmv$ immediately follows,
and we omit the details.

We start by showing that if there was no bound on the number of rounds,
then the values of $\algprox_v$ would have eventually converged to the true
values of $\proxlv$.

\newcommand{\LemmaEventualDistances}
{
Given a graph $G=(V,E,w)$ and a set $S\subseteq V$,
if we let the for loop in Line~\ref{line: for loop}
of Algorithm~\ref{alg: parallel weighted bfs}
to run forever,
then there exists a round $r_0\in \bbN$ such that
no node $v\in V$ sends messages or modifies $\algprox_v$ after round $r_0$.
Moreover, $\algprox_v^{(r_0)} = \proxlv$,
i.e., for every $(d_s,s,w_s)\in \algprox_v^{(r_0)}$, it holds that
$d_s = d(v,s)$ and
$w_s = \min \set{w(\rho) \mid
\rho\text{ connects } v \text{ with } s \text{, and } \size{\rho} = d_s}$.
}

\begin{lemma}%[Extending Lemma 4.1 in~\cite{LenzenP13} to handle weights]
\label{lem: eventual distances are correct}
\LemmaEventualDistances
\end{lemma}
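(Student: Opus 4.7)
The plan is to prove two statements in sequence: termination (some round $r_0$ exists after which no node sends or modifies $\algprox_v$) and correctness ($\algprox_v^{(r_0)} = \proxlv$ for every $v$). The crux of both arguments is a pair of invariants maintained by the algorithm. First, a \emph{soundness invariant}: every triple $(d_s, s, w_s)$ stored in $\algprox_v$ at any round is witnessed by an actual $s$-$v$ path of length exactly $d_s$ and weight exactly $w_s$. This is proved by induction on the round, using that triples are only created by initialization at sources (witnessing the trivial path) or by extending a received triple across one edge. Second, a \emph{strict monotonicity} invariant: whenever the entry for source $s$ in $\algprox_v$ is replaced, the recorded pair $(d_s, w_s)$ strictly decreases in lexicographic order.

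For termination, monotonicity bounds the total number of updates per pair $(v, s)$ by $n(nW+1)$, since $d_s \in \{0,\ldots,n-1\}$ and $w_s \in \{0,\ldots,nW\}$ with $W\in\text{poly}(n)$; thus the total number of updates across the network is finite. Each update produces at most one new unsent triple (the replacement), and each send strictly removes one unsent triple at the sender without creating a new one there. Hence after finitely many rounds no further updates occur, and the remaining finite backlog of unsent triples drains within finitely many further rounds, yielding the required $r_0$.

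For correctness I will prove, by induction on $k = d(s, v)$, that $\algprox_v^{(\infty)}$ contains the triple $(k, s, W^*_{s,v})$, where $W^*_{s,v} = \min\set{w(\rho) \mid \rho \text{ is a shortest } s\text{-}v \text{ path}}$. The base $k=0$ is immediate from initialization, and no incoming triple can be lex-smaller than $(0,s,0)$ at $s$. For the inductive step, fix a lightest shortest $s$-$v$ path $\rho$ and let $u^*$ be its penultimate vertex; the prefix of $\rho$ ending at $u^*$ is itself a lightest shortest $s$-$u^*$ path (any strictly shorter or lighter alternative would improve $\rho$), so $W^*_{s,v} = W^*_{s,u^*} + w(u^*, v)$. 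By induction, $\algprox_{u^*}$ stably contains $(k-1, s, W^*_{s,u^*})$ from some round onwards, and since the loop runs forever every unsent triple is eventually transmitted, so $u^*$ sends this triple to $v$ at some round. The candidate computed at $v$ is then exactly $(k, s, W^*_{s,v})$; by soundness no existing entry for $s$ in $\algprox_v$ can be lex-smaller than this, so either the entry already equals it or the update condition fires, and in either case monotonicity then freezes the entry. Taking $r_0$ to be the maximum over all pairs $(s,v)$ of the freezing round (finite by the termination argument) finishes the proof.

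The main obstacle I anticipate is the tension between the bandwidth bottleneck --- at most one triple is transmitted per round --- and the need for the specific triple along the inductive path to reach $v$. A naive worry is that triples for other sources could indefinitely preempt the needed one at $u^*$, but this concern dissolves in the infinite-runtime regime: monotonicity guarantees that only finitely many triples are ever created network-wide, so the lex-prioritized queue at $u^*$ must eventually drain and deliver the required triple in finite time. The remainder of the proof is careful bookkeeping of these freezing rounds and verification that the update condition in the algorithm exactly matches the lex comparison implicitly used in the invariants.
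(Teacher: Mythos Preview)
Your overall strategy matches the paper's: establish soundness (every stored triple is witnessed by an actual walk in $G$), then termination, then correctness by propagating the optimal triple along a lightest shortest path. Your correctness argument via induction on $d(s,v)$ is essentially the paper's argument of following the tree path in $T_s$ from $s$ to $v$, and your use of soundness to rule out a lex-smaller competitor at $v$ is exactly what the paper does.

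Your termination argument, however, has a gap. The ``strict monotonicity'' invariant does not hold for Algorithm~\ref{alg: parallel weighted bfs} as written: the update condition fires whenever no \emph{strictly} lex-smaller entry for $s$ is present, so an incoming message that reproduces the current entry $(d_s,s,w_s)$ still executes the update block and resets $\sent_v(d_s,s,w_s)$ to $\false$ without changing $(d_s,w_s)$. Hence the same value pair $(d_s,w_s)$ can be re-inserted at $v$ --- and the triple re-sent --- once for each message from a neighbor that produces it, and those re-sends can in turn trigger further equal re-insertions downstream. Your bound of $n(nW+1)$ executions of the update block per pair $(v,s)$ therefore does not follow from the stated invariant. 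The paper sidesteps this by observing that every insertion at $v$ is triggered by a neighbor's send with distance field one smaller, and unrolling this recursion to bound the total number of insertions of $(d_s,s,w_s)$ at $v$ by the number of $s$--$v$ walks of length $d_s$; since the graph is finite this count is finite (though possibly exponential), which is all that is needed here. With that replacement for your monotonicity-based count, your draining argument and the remainder of your proof go through unchanged.
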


\begin{proof}
In each iteration of the algorithm, each node $v \in V$ sends to all of its
neighbors the first triplet $(d_s,s,w_s)$ such that $\sent_v(d_s,s,w_s) = \false$. Each triplet received
is sent at most once. Therefore, if we show the existence of a round $r_0$ where
for each $v \in V$, all messages in $\algprox_v^{(r_0)}$ have been sent in previous
rounds, it implies that no message is sent in round $r_0$, and hence
$\algprox_v^{(r_0 + 1)} = \algprox_v^{(r_0)}$. This claim is applied inductively,
concluding that for any round $r > r_0$, it holds that $\algprox_v^{(r)} =
\algprox_v^{(r_0)}$.
We prove the existence of the round $r_0$ by showing that the number of messages that are sent by each node $v \in V$ is finite.

Each triple $(d_s, s, w_s)$ is inserted into $\algprox_v$
when a message is received from another node $u \in V$,
and such a message implies that
a path from $s$ to $v$ through $u$ of length $d_s$ and weight $w_s$ exists.
Thus, the number of messages $(d_s, s, w_s)$ sent by a node $v$ is upper bounded by the number of paths from $s$ to $v$ of length $d_s$.
Furthermore, the algorithm does not insert a message into $\algprox_v$
if it has already inserted a lexicographically smaller message from the same
source.
As the graph is finite, the number of paths is bounded,
and eventually no node adds further triplets to its lists
or sends additional messages.

It remains to show that for all $s \in S$ and $v \in V$ it holds that $(d(s,v),
s, w(s,v)) \in \algprox_v^{(r_0)}$.

First, we show that if a triplet $(d_s, s, w_s)$ is added to $\algprox_v$ in some round $r$, then there exists a path $\rho$ from $s$ to $v$ such that $d_s = |\rho|$ and $w_s = w(\rho)$.
For a source $s \in S$, we insert the triplet $(0, s, 0)$ into $\algprox_s$ at the beginning of the algorithm,
%and it is sent in the first round,
so the claim is true at initialization.
Assume there exists a round where a triplet $(d_s,s,w_s)$ is inserted into $\algprox_v$ but no corresponding path exists,
and let $r$ be the first such round.
This implies that there exists a node $u \in V$ that is a neighbor
of $v$, which sends the message $(d_s - 1, s, w_s - w(u,v))$ to $v$ in round $r$.
The triplet $(d_s - 1, s, w_s - w(u,v))$ must have been inserted into $\algprox_u$
in some round $r'<r$,
and by the minimality of $r$ there exists a path $\rho$ from $s$ to $u$ where $|\rho| = d_s - 1$
and $w(\rho) = w_s - w(u,v)$.
Since $u$ is a neighbor of $v$, the path $\rho' =
\rho \circ {v}$ is valid, satisfying $|\rho'|=d_s$ and $w(\rho')=w_s$,
contradicting the assumption.

To complete the proof,
we claim that the correct triplet $(d(s,v), s, w(s,v))$ is indeed
added to $\algprox_v$ at some round of the algorithm,
and is not removed.
Consider the path from $s$ to $v$ in the WBFS tree $T_s$,
denoted by $\rho = (v_0 = s, v_1, \ldots, v_{d(s,v)}=v)$.
At initialization, the triplet $(0,s,0)$ is added to $\algprox_s$.
From then, at each round there exists some $i \leq d(s,v)$
such that $(d(s,v_i), s, w(s,v_i)) \in \algprox_{v_i}$
and $\sent_{v_i}(d(s,v_i), s, w(s,v_i))=\false$.
Since we proved that this message is eventually sent,
this implies that in the beginning of the next round,
the triplet $(d(s,v_{i+1}), s, w(s,v_{i+1}))$
is added to  $\algprox_{v_{i+1}}$.
By the definition of a WBFS tree, all other paths from $s$ to $v$ must be
longer or not lighter, implying the triplet cannot be discarded for a
lexicographically smaller triplet.
This concludes that for any source $s$ and node $v$,
it holds that $(d(s,v), s, w(s,v)) \in \algprox_v^{(r_0)}$.
\end{proof}

Lemma~\ref{lem: eventual distances are correct}
shows that without the limit on the number of rounds,
the algorithm would compute the right values;
however, it does not bound the number of rounds needed for this to occur.
Next, we show that $\size{S}+D-1$ rounds suffice.
We cannot apply the claims of~\cite{LenzenPP19} directly,
since the existence of weights restricts
the number of viable solutions even further,
causing more updates to the proximity list
and an increase in the number of messages sent.
However,
we do use a similar technique:
we bound the number of rounds in which
the $k$ smallest entries of $\algprox_v$ can change.

For an entry $(d_s,s,w_s) \in \algprox_v^{(r)}$,
let $\ell_v^{(r)}(d_s, s, w_s)$
denote the index of the entry in the
lexicographically ordered list $\algprox_v^{(r)}$
at the beginning of round $r$.
For completeness,
we define $\ell_v^{(r)}(d_s, s, w_s)=-\infty$
if $(d_s, s, w_s)$ did not appear in $\algprox_v$ at the beginning of round $r$,
and $\algprox_v=\infty$ if the triplet was removed from $\algprox_v$ before the beginning of this round.
Note that a removed triplet is never returned to the list,
since the lexicographical order is transitive.

\begin{lemma} %[Auxiliary lemma]
\label{helper lemma}
For a triplet $(d_s, s,w_s)$, the following holds:
\begin{enumerate}[(i)]
  \item $\ell_v^{(r)}(d_s, s,w_s)$ is non-decreasing with r.
  \item
  	When a triplet $(d_s, s,w_s)$ is sent
  	from a node $u$ to a node $v$ at round $r$,
  	it causes the addition of a new triplet $(d'_s, s,w'_s)$ to $\algprox_v$ at the end of round $r$.
  	This triplet satisfies $d'_s = d_s + 1$, $w'_s = w_s + w(u,v)$,
  	and $\ell_u^{(r)}(d_s, s,w_s) \leq \ell_v^{(r+1)}(d'_s, s,w'_s)$.
\end{enumerate}
\end{lemma}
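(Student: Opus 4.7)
The plan is to handle the two parts separately, using part (i) as a stepping stone inside the proof of part (ii). For part (i), I will argue by case analysis that the only events between consecutive rounds that can alter $\ell_v^{(r)}(d_s, s, w_s)$ are (a)~an insertion into $\algprox_v$ of some triplet, or (b)~the removal of $(d_s, s, w_s)$ itself. By inspection of Algorithm~\ref{alg: parallel weighted bfs}, event (a) adds a triplet that is strictly lex-smaller than the previous entry for its source, so it either lex-precedes $(d_s, s, w_s)$ (bumping its index up by one) or lies after it (no change). Event (b) can only happen when a strictly lex-smaller triplet for source $s$ arrives, at which point $\ell_v(d_s, s, w_s)$ jumps to $\infty$. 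I will also note that once $(d_s, s, w_s)$ has been removed it can never reappear, since the algorithm only installs strictly lex-smaller triplets per source, and the lex order is a strict total order. Together these observations give monotonicity.

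For part (ii), my approach is a counting argument based on the send order at $u$. Let $k = \ell_u^{(r)}(d_s, s, w_s)$, so that at the beginning of round $r$ there are $k-1$ triplets $T_1, \ldots, T_{k-1}$ in $\algprox_u^{(r)}$ that lex-precede $(d_s, s, w_s)$, each for a distinct source $t_i \neq s$. The selection rule on line of the algorithm (sending the lex-smallest unsent triplet) forces each $T_i$ to have been sent by $u$ in some earlier round $r_i < r$. Moreover, because per-source updates in $\algprox_u$ only ever replace an entry with a strictly lex-smaller one, the current value $T_i \in \algprox_u^{(r)}$ is exactly the value that was transmitted at round $r_i$; no later ``worse'' version could have overwritten it.

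I then trace each $T_i = (d_{t_i}, t_i, w_{t_i})$ through $v$. Upon receiving it at round $r_i$, $v$ either installs $T_i' := (d_{t_i}+1, t_i, w_{t_i}+w(u,v))$ in $\algprox_v$, or its existing entry for $t_i$ is already strictly lex-smaller than $T_i'$. Applying part (i) to the resulting entry (or to any subsequent strictly lex-smaller replacement) shows that at the beginning of round $r+1$, $\algprox_v$ still contains some triplet for source $t_i$ whose value is lex-at-most $T_i'$. A direct comparison then yields $T_i' \prec (d_s+1, s, w_s+w(u,v))$: either $d_{t_i} < d_s$, which gives $d_{t_i}+1 < d_s+1$, or $d_{t_i} = d_s$ with $t_i < s$, which gives the same distance but a smaller source id. Thus $\algprox_v^{(r+1)}$ carries at least $k-1$ triplets, one per source $t_i$, all lex-preceding $(d_s+1, s, w_s+w(u,v))$; combined with the inserted triplet itself this forces $\ell_v^{(r+1)}(d_s+1, s, w_s+w(u,v)) \geq k$.

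The main obstacle will be the bookkeeping needed to justify that the $k-1$ lex-preceding triplets currently in $\algprox_u^{(r)}$ are precisely the values that were previously transmitted to $v$, rather than stale earlier versions. The crucial structural fact that unlocks this is the per-source monotonicity of $\algprox_u$ under the algorithm's update rule, which ensures the absence of any ``back-and-forth'' in the history of a source entry; once this is made precise, both parts of the lemma fall out cleanly, with part (i) being invoked at the very end of part (ii) to propagate the index bound forward through any further improvements at $v$.
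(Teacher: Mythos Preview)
Your proposal is correct and follows essentially the same route as the paper: part~(i) via the observation that replacements in $\algprox_v$ only ever substitute a lex-smaller triplet for the same source, and part~(ii) via the counting argument that the $k-1$ lex-predecessors of $(d_s,s,w_s)$ in $\algprox_u^{(r)}$ were already transmitted to $v$, where each induces (or is dominated by) a lex-predecessor of $(d'_s,s,w'_s)$ in $\algprox_v^{(r+1)}$. One small remark: the ``bookkeeping obstacle'' you flag is lighter than you suggest---since $(d_s,s,w_s)$ is the lex-minimal \emph{unsent} entry at round $r$, every lex-smaller $T_i$ currently in $\algprox_u^{(r)}$ must itself carry $\sent_u(T_i)=\true$, which directly certifies that $T_i$ (not some stale earlier version) was transmitted; the per-source monotonicity you invoke is then only needed on the $v$ side, exactly as in your final paragraph.
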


Part~(i) follows from the fact that the number of triplets below $(d_s,s,w_s)$ cannot decrease.
To prove part~(ii), we show that all the triplets below $(d_s, s,w_s)$
in $\algprox_u$ are sent from $u$ to $v$ and added to $\algprox_v$
before $(d_s, s,w_s)$ is sent and added.

\begin{proof}
Part (i) is a consequence of the method used by our algorithm for managing the list $\algprox_v$.
According to our algorithm, triplets are not removed from
$\algprox_v$ when they are sent.
The only case in which a triplet $(d_t,t,w_t)$ is removed from $\algprox_v$ is when a lexicographically smaller triplet $(d'_t,t,w'_t)$ is added to the list instead.
When this happens in round $r$,
it holds that
$\ell_v^{(r)}(d_t,t,w_t) \geq \ell_v^{(r+1)}(d'_t,t,w'_t)$,
since the new triplet is lexicographically smaller.
Hence, for every other triplet $(d_s,s,w_s)\in \algprox_v$,
the number of lexicographically smaller triplets in $\algprox_v$ cannot decrease throughout the algorithm.

We now turn to prove part (ii) of the lemma.
	Note that the only claim which does not follows trivially from the algorithm is the inequality  $\ell_u^{(r)}(d_s, s,w_s) \leq \ell_v^{(r+1)}(d'_s, s,w'_s)$.
By the fact that the triplet $(d_s,s,w_s)$ is sent by the node $u$ in round $r$,
we conclude that
the $\ell_u^{(r)} (d_s,s,w_s)-1$ triplets preceding it in the list $\algprox_u^{(r)}$
have already been sent by $u$ in earlier rounds,
and arrived at the node $v$.
For each such triplet $(d_t,t,w_t)$,
either $d_t \leq d_s$, or $t < s$ and $d_t = d_s$.
Therefore, when added to $\algprox_v$ as
$(d_t+1,t,w_t+w(u,v))$ it is lexicographically smaller than $(d'_s,s,w'_s)$.
At round $r$,
either $(d_t+1,t,w_t+w(u,v))$ is in $\algprox_v^{(r)}$
or it was replaced by a lexicographically smaller triplet
containing $t$.
Thus, there are at least $\ell_u^{(r)} (d_s,s,w_s)-1$ triplets
smaller than $(d'_s,s,w'_s)$ in $\algprox_v^{(r+1)}$,
and hence $\ell_u^{(r)} (d_s,s,w_s) \leq \ell_v^{(r+1)}(d'_s,s,w'_s)$.
\end{proof}

Lemma \ref{helper lemma} implies that as the algorithm progresses, messages at higher
indexes of the proximity list are sent and updated. This can be used to obtain
an upper bound on the round in which a triplet at a certain index of the
proximity list can be sent or received,
as formalized by the next lemma.

\begin{lemma}%[Extending Lemma 4.2 in~\cite{LenzenP13} to handle weights]
\label{lem: first entries stabilize}
In round $r\in \bbN$ of Algorithm~\ref{alg: parallel weighted bfs},
a node $v\in V$ can:
\begin{enumerate}[(i)]
  \item send a message $(d_s,s,w_s)$ only if
      \[
      d_s + \ell_v^{(r)}(d_s,s,w_s) \geq r
      \]
  \item add to $\algprox_v$ a triplet $(d_s,s,w_s)$ only if
      \[
      d_s + \ell_v^{(r+1)}(d_s,s,w_s) > r
      \]
\end{enumerate}
\end{lemma}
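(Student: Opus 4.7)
The plan is a simultaneous induction on $r$ for parts (i) and (ii): both are established at round $r$ under the hypothesis that both hold for every round $r' < r$. Within a given round, (i) is proved first for all nodes, and (ii) is then derived using (i) applied at round $r$ to the neighbor that sent the triggering message. For the base case $r = 1$, only nodes of $S$ can participate (either sending $(0,v,0)$, or receiving from a neighboring source), and direct inspection of the initialization yields both bounds.

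For the inductive step of (i): let $v$ send $(d_s, s, w_s)$ at round $r > 1$, put $k = \ell_v^{(r)}(d_s, s, w_s)$, and let $r_a \geq 1$ be the round in which this triplet was added to $\algprox_v$. By the inductive hypothesis (ii) at round $r_a$, $d_s + \ell_v^{(r_a+1)}(d_s, s, w_s) > r_a$. Since $(d_s, s, w_s)$ is unsent throughout $[r_a + 1, r - 1]$, the algorithm forces $v$ to send a strictly smaller triplet in each of these rounds. The heart of the argument is to bound the round at which any such smaller triplet can have been sent: for $(d_t, t, w_t) < (d_s, s, w_s)$ sent by $v$ at round $r_t < r$, the inductive hypothesis (i) gives $r_t \leq d_t + \ell_v^{(r_t)}(d_t, t, w_t)$, and I will argue $\ell_v^{(r_t)}(d_t, t, w_t) \leq k - 1$. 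For this, observe that the sources whose triplet was strictly below $(d_t, t, w_t)$ in $\algprox_v^{(r_t)}$, together with source $t$ itself, are all distinct sources whose current triplet in $\algprox_v^{(r)}$ is strictly below $(d_s, s, w_s)$: each source has at most one entry in the list and the only removals are replacements by lex-smaller triplets for the same source. Combining, $r_t \leq d_s + k - 1$. Since every round of $[r_a + 1, r - 1]$ contributes such an $r_t$, we get $r - 1 \leq d_s + k - 1$, i.e., $d_s + k \geq r$. If $[r_a + 1, r - 1]$ is empty (that is, $r = r_a + 1$), then $k = \ell_v^{(r_a+1)}$ and the inductive hypothesis (ii) at $r_a$ directly gives $d_s + k > r - 1$, hence $d_s + k \geq r$.

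For (ii) at round $r$: if $v$ adds $(d_s, s, w_s)$ at the end of round $r$, then some neighbor $u$ sent $(d_s - 1, s, w_s - w(u,v))$ during round $r$. Applying (i) at round $r$—just proved—to $u$ yields $(d_s - 1) + \ell_u^{(r)}(d_s - 1, s, w_s - w(u,v)) \geq r$, and Lemma~\ref{helper lemma}(ii) gives $\ell_v^{(r+1)}(d_s, s, w_s) \geq \ell_u^{(r)}(d_s - 1, s, w_s - w(u,v))$. Together these produce $d_s + \ell_v^{(r+1)}(d_s, s, w_s) \geq r + 1 > r$, as required.

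The main obstacle will be the bookkeeping step inside (i): correctly handling triplets that were sent in earlier rounds but have since been replaced in $\algprox_v$. The argument that $\ell_v^{(r_t)}(d_t, t, w_t) \leq k - 1$ relies essentially on the at-most-one-triplet-per-source invariant and the monotonicity of replacements (each replacement is lex-smaller for the same source), which together let us lift a rank bound from the sending round $r_t$ back to the rank at round $r$ without losing distinct sources along the way.
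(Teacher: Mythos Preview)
Your proof is correct and follows essentially the same inductive structure as the paper: part~(ii) is derived from part~(i) at the same round via Lemma~\ref{helper lemma}(ii), and part~(i) is handled by case-splitting on when the triplet was added, ultimately appealing to the inductive hypothesis at round $r-1$. The paper's treatment of the case $r_a < r-1$ is slightly more direct---it simply invokes Lemma~\ref{helper lemma}(i) to get $\ell_v^{(r)}(d_s,s,w_s) \geq \ell_v^{(r-1)}(d_s,s,w_s)$ and chains this with the lex-order inequality and the inductive hypothesis at round $r-1$, rather than re-deriving the rank bound via your source-counting argument over all of $[r_a+1,r-1]$ (which in the end you only use at $r_t = r-1$)---but the two arguments are equivalent.
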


Part~(i), when put in words, is rather intuitive:
while a triplet might need to wait before being sent,
the waiting time is bounded from above by the distance the triplet
has traversed from its source,
plus the number of triplets that were to be sent before it.
Part~(ii) is complementary to part~(i):
the time before a triplet is added, is, once more,
bounded by the distance it traversed plus the number of lexicographically smaller triplets.

\begin{proof}
We start by showing that,
for a given round $r$,
if Lemma~\ref{lem: first entries stabilize}(i) holds for all nodes
then Lemma~\ref{lem: first entries stabilize}(ii) holds as well.
Consider a triplet $(d'_s,s,w'_s)$ that is added to
$\algprox_v$ as a result of a message $(d_s,s,w_s)$
sent from $u$ to $v$ in round $r$,
where $d'_s=d_s+1$ and $w'_s=w_s+w(u,v)$.
Lemma~\ref{lem: first entries stabilize}(i) implies that $d_s + \ell_u^{(r)}(d_s,s,w_s) \geq r$,
and by Lemma~\ref{helper lemma}(ii) we have that
$\ell_v^{(r+1)}(d'_s,s,w'_s) \geq \ell_u^{(r)}(d_s,s,w_s)$.
As $d'_s > d_s$, we conclude
\[d'_s + \ell_v^{(r+1)}(d'_s,s,w'_s) > d_s + \ell_u^{(r)}(d_s,s,w_s)
\geq r,\]
which implies Lemma~\ref{lem: first entries stabilize}(ii).

Next, we prove by induction that both parts of the lemma hold.
In round $1$, Lemma~\ref{lem: first entries stabilize}(i) holds trivially, since by
definition $\ell_v^{(1)}(d_s,s,w_s) \geq 1$.
Assume that Lemma~\ref{lem: first entries stabilize}
holds at round $r-1$;
we show the lemma holds at round $r$.
Since Lemma~\ref{lem: first entries stabilize}(i) implies Lemma~\ref{lem: first entries stabilize}(ii),
it is sufficient to show that every message $(d_s,s,w_s)$
sent by some node $v \in V$ in round $r$
satisfies $d_s + \ell_v^{(r)}(d_s,s,w_s) \geq r$.

Observe that if $(d_s,s,w_s)$ is sent by a node $v$ in round $r$,
then the triplet must have been added to $\algprox_v$ in some round $r' \leq r-1$.
If $r' = r-1$, according to the induction hypothesis, Lemma~\ref{lem: first entries stabilize}(ii) holds and
$d_s + \ell_v^{(r)}(d_s, s, w_s) > r-1$,
implying $d_s + \ell_v^{(r)}(d_s, s, w_s) \geq r$,
since all the terms are integers.

Otherwise $r' < r-1$. In this case, in round $r-1$ the triplet $(d_s,s,w_s)$
appeared in $\algprox_v$ and was not yet sent. Since $(d_s,s, w_s)$ is sent in round
$r$, a different triplet $(d_{t}, t,w_{t})$ with $t \neq s$ must have
been sent in round $r-1$, implying:
\[
d_s + \ell_v^{(r-1)}(d_s, s, w_s) > d_{t} + \ell_v^{(r-1)}(d_{t}, t, w_{t}).
\]

By Lemma~\ref{helper lemma}(i), we have that $\ell_v^{(r)}(d_s,s,w_s) \geq
\ell_v^{(r-1)}(d_s,s,w_s)$, and combined with the induction hypothesis for Lemma~\ref{lem: first entries stabilize}(i) in
round $r-1$ we conclude:
\[
d_s + \ell_v^{(r)}(d_s,s,w_s)  \geq d_s + \ell_v^{(r-1)}(d_s,s,w_s) > d_{t} + \ell_v^{(r-1)}(d_{t}, t, w_{t}) \geq r-1.
%d_s + \ell_v^{(r)}(d_s,s,w_s) & \geq d_s + \ell_v^{(r-1)}(d_s,s,w_s) \\
%& > d_{t} + \ell_v^{(r-1)}(d_{t}, t, w_{t}) \\
%& \geq r-1.
\]
This gives that $d_s + \ell_v^{(r)}(d_s,s,w_s) \geq r$, since all the terms
are integers.
\end{proof}

Lemma~\ref{lem: eventual distances are correct} implies that eventually,
the lists $\algprox_v$ converge to contain the correct values,
and Lemma \ref{lem: first entries stabilize} restricts the number of rounds in which specific list entries may change.
From this, we conclude that the algorithm solves the weighted $(S,d,k)$-detection problem.

\begin{lemma}%[Extending Lemma 4.3 in~\cite{LenzenP13} to handle weights]
\label{lem: first r entries are correct}
Given an instance of the weighted $(S,d,k)$-detection problem,
for every $v\in V$
and round $r$ of an execution
of Algorithm~\ref{alg: parallel weighted bfs} with
\[
r\geq \min\set{d,D} + \min\set{k,\size{S}},
\]
the truncation of $\algprox_v^{(r)}$
to the first $\min\set{k,\lambda^d_v}$ entries,
where $\lambda^d_v$ is the number sources $s\in S$ such that $d(s,v)\leq d$,
solves weighted $(S,d,k)$-detection problem.
\end{lemma}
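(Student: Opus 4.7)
My plan is to prove by induction on $i$ the following stronger statement: for each $i\in\set{1,\ldots,m}$, where $m=\min\set{k,\lambda^d_v}$ and $(d_i,s_i,w_i)$ denotes the $i$-th entry of $\proxlv$, every round $r\geq d_i+i$ satisfies that the first $i$ entries of $\algprox_v^{(r)}$ agree with the first $i$ entries of $\proxlv$. Since $d_i\leq\min\set{d,D}$ and $i\leq m\leq\min\set{k,\size{S}}$, taking $i=m$ yields the bound claimed in the lemma, and the base case $i=0$ is vacuous.

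For the inductive step, I would fix $r\geq d_i+i$ and invoke the inductive hypothesis (applicable since $d_{i-1}+(i-1)\leq d_i+i$) to conclude the first $i-1$ entries are already correct; it then suffices to show the triplet at position $i$ of $\algprox_v^{(r)}$ is $(d_i,s_i,w_i)$. I would split this into a positional analysis and a presence sub-claim. For the positional analysis, recall from the proof of Lemma~\ref{lem: eventual distances are correct} that every triplet in $\algprox_v$ corresponds to a real path, and hence has length at least the true distance from its source. A short case analysis then shows that any triplet $(d'_s,s,w'_s)\in\algprox_v^{(r)}$ lexicographically smaller than $(d_i,s_i,w_i)$ must have $s\in\set{s_1,\ldots,s_{i-1}}$. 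Since each source holds at most one triplet, there are at most $i-1$ lex-smaller triplets; combined with the inductive hypothesis these are exactly the first $i-1$ entries, so $(d_i,s_i,w_i)$, once present, must occupy position $i$.

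The remaining piece is the presence sub-claim: $(d_i,s_i,w_i)\in\algprox_v^{(r)}$ for all $r\geq d_i+i$. I would prove this by a secondary induction along the path $s_i=u_0,u_1,\ldots,u_{d_i}=v$ in the WBFS tree $T_{s_i}$, showing that $(\tau,s_i,w(s_i,u_\tau))\in\algprox_{u_\tau}^{(r')}$ for every $r'\geq\tau+i$, where $w(s_i,u_\tau)$ is the weight of the tree-prefix from $s_i$ to $u_\tau$. The base case $\tau=0$ is immediate from initialization, as $(0,s_i,0)$ is inserted at the outset and no smaller triplet for $s_i$ exists. For the inductive step, the triangle inequality $d(s',v)\leq d(s',u_\tau)+(d_i-\tau)$ shows that any source $s'$ lexicographically smaller than $s_i$ at $u_\tau$ belongs to $\set{s_1,\ldots,s_{i-1}}$, so the final position of $(\tau,s_i,w(s_i,u_\tau))$ in the proximity list of $u_\tau$ is at most $i$; by Lemma~\ref{helper lemma}(i) its position in $\algprox_{u_\tau}$ therefore never exceeds $i$.

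Given this position bound, I would combine Lemma~\ref{lem: eventual distances are correct}, which guarantees the triplet is eventually sent by $u_\tau$ at some round $r^*$, with Lemma~\ref{lem: first entries stabilize}(i), which forces $r^*\leq\tau+\ell_{u_\tau}^{(r^*)}\leq\tau+i$, to conclude that the message is sent by round $\tau+i$. Its receipt then inserts $(\tau+1,s_i,w(s_i,u_{\tau+1}))$ into $\algprox_{u_{\tau+1}}^{(r^*+1)}$ with $r^*+1\leq(\tau+1)+i$, and this insertion is permanent because, by the WBFS property of $T_{s_i}$, the inserted triplet is the lex-minimum possible triplet for source $s_i$ at $u_{\tau+1}$. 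The main obstacle is exactly this combination step: Lemma~\ref{lem: eventual distances are correct} provides only existence of the sending round, while Lemma~\ref{lem: first entries stabilize}(i) gives only a one-sided ``only if'' constraint; converting these into a genuine upper bound on propagation time requires the position monotonicity of Lemma~\ref{helper lemma}(i) together with the triangle-inequality control of the final position.
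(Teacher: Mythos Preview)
Your proposal is correct, but it takes a considerably more elaborate route than the paper. The paper's proof dispenses with both inductions entirely: for any triplet $(d_s,s,w_s)$ that belongs in the first $\min\set{k,\lambda^d_v}$ entries of $\proxlv$, it simply lets $r$ be the round in which this triplet is inserted into $\algprox_v$, invokes Lemma~\ref{lem: first entries stabilize}(ii) to get $r < d_s + \ell_v^{(r+1)}(d_s,s,w_s)$, and then uses Lemma~\ref{helper lemma}(i) to bound $\ell_v^{(r+1)}$ by the final position, which is at most $\min\set{k,\lambda^d_v}\leq k$. This immediately yields $r < d + k$, with no hop-by-hop argument.

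The difference is that Lemma~\ref{lem: first entries stabilize}(ii) already encapsulates the path-propagation bound you reconstruct by your secondary induction; by working with the ``add'' side of that lemma at the destination node $v$, the paper avoids re-deriving the timing at every intermediate $u_\tau$. Your approach instead uses part~(i) (the ``send'' side) repeatedly along the tree path, which is valid but redundant given part~(ii). Your outer induction on $i$ and the explicit positional analysis are also not needed in the paper's argument, since once all the correct top-$\min\set{k,\lambda^d_v}$ triplets are present they automatically occupy the top positions. On the other hand, your version does make that positional claim explicit, and your intermediate statement $r\geq d_i+i$ is a genuine (if minor) refinement of the paper's uniform bound $r\geq d+k$.
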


This lemma says that the truncated list is
correct at the beginning of the relevant round.
To prove it, we use
Lemma~\ref{lem: first entries stabilize}(ii) to show that the values in the truncated list cannot change at round $r$ or later,
and Lemma~\ref{lem: eventual distances are correct} to deduce they are correct.

\begin{proof}
Assume w.l.o.g that $d\leq D$,
as $D$ bounds the distance to any source,
and $k\leq \size{S}$,
as otherwise $v$ needs to learn about all sources.

By Lemma~\ref{lem: eventual distances are correct},
there is a round $r_0$ when all entries of $\algprox_v^{(r_0)}$ are correct,
and let $(d_s,s,w_s)$ be a triplet in one of the first $\min\set{k,\lambda^d_v}$ entries of $\algprox_v^{(r_0)}$.
Since $(d_s,s,w_s)$ is one of the first $\lambda^d_v$ entries and  $\algprox_v^{(r_0)} = \proxlv$,
we have $d_s\leq d$.

Let $r$ be the round when $(d_s,s,w_s)$ is inserted to the list $\algprox_v$.
By Lemma~\ref{lem: first entries stabilize}(ii),
$r< d_s + \ell_v^{(r+1)}(d_s,s,w_s)$.
By Lemma~\ref{helper lemma}(i),
when the triplet is inserted to the list,
it is already placed in one of the first $\min\set{k,\lambda^d_v}$ entries,
i.e.,  $\ell_v^{(r+1)}(d_s,s,w_s) \leq \min\set{k,\lambda^d_v}\leq k$.
%Using $d_s\leq d$, we get
Hence,
\[
r < d_s + \ell_v^{(r+1)}(d_s,s,w_s)\leq d+k.
\]
Since this claim holds for any of the first $\min\set{k,\lambda^d_v}$
entries,
these were all correct at the beginning of round $d+k$,
and in all the succeeding rounds.
\end{proof}

The construction of multiple WBFS trees
is an instance of the $(S,D,\size{S})$-detection problem.
Lemma~\ref{lem: first r entries are correct} shows that
after $\size{S}+D-1$ rounds of Algorithm~\ref{alg: parallel weighted bfs}
on such an instance,
all the entries of the list $\algprox_v^{(\size{S}+D)}$ are correct,
yielding the main result of this section.

\begin{theorem-repeat}{thm: weighted bfs}
\ThmWBFS
\end{theorem-repeat}

\section{A ($+6$)-Spanner Construction}
\label{sec: spanner construction}
In this section we discuss the distributed construction of
$(+6)$-spanners.
First, we present a template for constructing
a $(+6)$-spanner
and analyze the stretch and sparsity of the constructed spanner.
Then,
we provide an implementation of our template
in the \cgst{} model and analyze its running time.

%We begin with some notations, as follows.
A \emph{cluster} $C_i$ around a \emph{cluster center} $c_i\in V$ is a subset of the set of neighbors of $c_i$ in $G$.
A node belonging to a cluster is \emph{clustered}, while the other nodes are \emph{unclustered}. %We use $\calC$ to denote the set of cluster centers and $\hat\calC$ to denote the set of clusters.

Our algorithm starts by randomly choosing cluster centers, and adding edges between them to their neighbors, where each neighbor arbitrarily chooses a single center to connect to.
Then, additional edges are added, to connect each unclustered node to all its neighbors.
Next, shortest paths between clusters are added to the spanner.
In order to find these shortest paths in the \cgst{} model,
we use the WBFS construction algorithm
to build WBFS trees from random sources.
At the heart of our algorithm stands the path-hitting framework of Woodruff~\cite{Woodruff_2010}:
a shortest path in the graph which has many edges between clustered nodes, must go through many clusters.
This fact is used in order to show that
a path with many missing edges (edges not in $H$)
is more likely to have an adjacent source of a WBFS tree,
and thus it is well approximated by a path within the spanner.

Woodruff's algorithm starts with a similar clustering step.
However, in order to add paths between clusters,
it uses an involved subroutine
that finds light almost-shortest paths between pairs of nodes.
This subroutine seems too global to be implemented efficiently in a distributed setting,
so in our construction
it is replaced by only considering lightest shortest paths,
which we do using the WBFS trees defined earlier.

Our algorithm constructs a $(+6)$-spanner with
$O(n^{4/3}\log^{4/3}n)$ edges in $\tilde{O}(n^{2/3}+D)$ rounds,
as stated next.

\begin{theorem-repeat}{thm: 6ap}
\ThmSpanner
\end{theorem-repeat}

Lemmas~\ref{lemma: 6ap size} and \ref{lemma: 6ap stretch}
analyze the size and stretch of \sixap{} given below.
The number of rounds of its distributed implementation
is analyzed in Lemma~\ref{lemma: 6ap complexity}, giving Theorem~\ref{thm: 6ap}.
We use $c>2$ to denote a constant that can be chosen according to the desired exponent of $1/n$ in the failure probability.

\subsubsection*{\sixap{}}
Input: a graph $G=(V,E)$, a constant $c>2$;\\
Output: a subgraph $H$ of $G$;\\
Initialization:
$n\gets |V|$; $H\gets \left(V,\emptyset\right)$; $k\gets 1$

\paragraph{Clustering.}
Pick each node as a \emph{cluster center}
w.p.\ $\frac{c}{n^{1/3}\log^{1/3}n}$,
and denote the set of selected nodes by
$\calC = \set{ c_1, c_2, \ldots }$.
For each $c_i$, initialize a cluster $C_i\gets\emptyset$.

For each node $v\in V$,
choose a neighbor $c_i$ of $v$ which is a cluster center,
if such a neighbor exists,
add the edge $(v,c_i)$ to $H$, and add $v$ to $C_i$.
If none of the neighbors of $v$ is a cluster center,
add to $H$ all the edges adjacent to $v$. Let $H_0\gets H$.

\paragraph{Path Buying.}$ $\\
While $k\leq \frac{8cn^{2/3}}{\log^{1/3} n}$ do:
\begin{enumerate}
\item
    $S_k \gets \emptyset$
\item
    Add each cluster center $c_i\in \calC$ to $S_k$
    w.p.\ $\frac{8c^2\log n}{k}$,
    independently of the other centers
\item
    For each pair $(c_i,c_j)\in \calC \times S_k$:
    \begin{enumerate}
    \item
        $A \gets \emptyset$ \hfill /* $A$ is a set of paths */
    \item
        For each $v\in C_j$:
        \begin{enumerate}
        \item
            Among all the shortest paths from $c_i$ to $v$,
            let $P_v$ be a path with minimum $\size{P_v \setminus H_0}$
        \item
            If $\size{P_v \setminus H_0} <2k$, add $P_v$ to $A$
        \end{enumerate}
    \item
        If $A\neq \emptyset$,
        add to $H$ one of the shortest among the paths of $A$
    \end{enumerate}
\item
    $k\gets 2k$
\end{enumerate}

%\subsubsection{Analysis of \sixap{}}

%We now bound the size of the spanner $H$ created by the algorithm and its stretch.

\begin{lemma}
\label{lemma: 6ap size}
\sixap{} outputs a subgraph $H$ of $G$
with $O(n^{4/3} \log^{4/3} n)$ edges,
with probability at least~$1-O(n^{-c+1})$.
\end{lemma}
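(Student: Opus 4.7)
The plan is to bound $\size{E(H)}$ by analyzing the two phases of \sixap{} separately, each contributing $O(n^{4/3}\log^{4/3}n)$ edges with probability at least $1-O(n^{-c+1})$, and then combining via a union bound.

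For the \emph{clustering phase}, set $p := c/(n^{1/3}\log^{1/3}n)$. Each clustered node contributes exactly one edge to $H$, accounting for at most $n$ edges. The potentially costly contribution is from unclustered nodes. A node $v$ is unclustered with probability $(1-p)^{\deg(v)} \leq e^{-p\deg(v)}$, so setting a threshold $\tau := c\ln n/p = \Theta(n^{1/3}\log^{4/3}n)$, any node with $\deg(v) \geq \tau$ is unclustered with probability at most $n^{-c}$. A union bound over the $n$ vertices gives that, with probability at least $1-n^{-c+1}$, every unclustered vertex has degree at most $\tau$. Summing over the at most $n$ unclustered vertices bounds the clustering contribution by $O(n\tau) = O(n^{4/3}\log^{4/3}n)$.

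For the \emph{path-buying phase}, the outer loop runs $O(\log n)$ iterations, since $k$ starts at $1$ and doubles until it exceeds $8cn^{2/3}/\log^{1/3}n$. In iteration $k$, each pair $(c_i,c_j)\in \calC\times S_k$ contributes at most one path with fewer than $2k$ edges outside $H_0$, so it adds at most $2k$ new edges. Thus the edges added in iteration $k$ are bounded by $2k\cdot\size{\calC}\cdot\size{S_k}$. A standard Chernoff bound gives $\size{\calC} = \Theta(np) = \Theta(n^{2/3}/\log^{1/3}n)$ with probability at least $1-n^{-\Omega(c)}$; conditionally on this, $\size{S_k}$ is Binomial with mean $\mu_k = (8c^2\log n/k)\cdot\size{\calC}$, and another Chernoff bound yields $\size{S_k} = O(\mu_k)$ with probability at least $1-n^{-\Omega(c)}$. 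Substituting, each iteration adds at most $O(k\cdot\size{\calC}\cdot\mu_k) = O(\size{\calC}^2\log n) = O(n^{4/3}\log^{1/3}n)$ edges, and summing over the $O(\log n)$ iterations yields the required $O(n^{4/3}\log^{4/3}n)$ bound.

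The main obstacle will be ensuring that the Chernoff concentration of $\size{S_k}$ applies uniformly in $k$: this requires $\mu_k = \Omega(\log n)$ for every iteration, i.e., $k = O(\size{\calC})$. This follows from the fact that (using $c>2$) the loop terminates at $k \leq 8cn^{2/3}/\log^{1/3}n$, while w.h.p.\ $\size{\calC} = \Omega(n^{2/3}/\log^{1/3}n)$, so $k$ never surpasses a constant multiple of $\size{\calC}$ during the loop. Combining the $O(\log n)$ iteration-failure events with the two clustering-phase failure events via a union bound yields the overall failure probability $O(n^{-c+1})$.
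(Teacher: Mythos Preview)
Your proposal is correct and follows essentially the same approach as the paper: bound the clustering contribution by showing high-degree nodes are clustered w.h.p., then bound the path-buying contribution by $2k\cdot|\calC|\cdot|S_k|$ per iteration using Chernoff bounds on $|\calC|$ and $|S_k|$, and sum over $O(\log n)$ iterations. The only cosmetic difference is that you condition on $|\calC|$ before bounding $|S_k|$, whereas the paper bounds both directly; to match the stated failure probability $O(n^{-c+1})$ exactly (rather than $n^{-\Omega(c)}$) you should invoke the Chernoff form $\Pr[X>4\mu]\leq e^{-\mu}$ as the paper does.
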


\begin{proof}
The algorithm starts with $H=(V,\emptyset)$ and only adds edges from $G$,
so $H$ is indeed a subgraph of $G$ over the same node set.

In the first part of the clustering phase,
each node adds to $H$ at most one edge,
connecting it to a single cluster center,
for a total of $O(n)$ edges.
Then, the probability that a node of degree at least $n^{1/3}\log^{4/3} n$
is left unclustered
is at most $\left(1-\frac{c}{n^{1/3}\log^{1/3}n}\right)^{n^{1/3}\log^{4/3}n}$,
which is $O(n^{-c})$.
A union bound implies that all nodes of degree at least $n^{1/3}\log^{4/3}n$
are clustered w.p.~$1-O(n^{-c+1})$,
and thus the total number of edges added to $H$ by unclustered nodes
in the second part of the clustering phase
is $O(n^{4/3}\log^{4/3}n)$, w.p.~$1-O(n^{-c+1})$.

We start the analysis of the path buying phase by bounding the size of $\calC$.
A node $v\in V$ is added to $\calC$ w.p.\ $\frac{c}{n^{1/3}\log^{1/3}n}$,
so $\E[\size{\calC}] = \frac{cn^{2/3}}{\log^{1/3} n}$.
A Chernoff bound implies that
\[\Pr\left[\size{\calC} > \frac{4cn^{2/3}}{\log^{1/3} n}\right]
    \leq \exp\left(- \frac{cn^{2/3}}{\log^{1/3} n}\right) =o(n^{-c}).\]
Similarly, for each value of $k$,
we have $\E[\size{S_k}] = \frac{8c^2 n^{2/3}\log^{2/3} n}{k}$, and
\[\Pr\left[\size{S_k} > \frac{32c^2 n^{2/3}\log^{2/3} n}{k}\right]
\leq \exp\left(-\frac{8c^2 n^{2/3}\log^{2/3} n}{k}\right)
=O(n^{-c}),\]
where the last equality follows since $k\leq \frac{n^{2/3}}{\log^{1/3}n}$.
A union bound implies that
$\size{\calC} = O\left(\frac{n^{2/3}}{\log^{1/3} n}\right)$
and $\size{S_k} = O\left(\frac{n^{2/3}\log^{2/3} n}{k}\right)$
for all $k$,
w.p.\ at least~$1-O(n^{-c+1})$.

Finally, for each $k$,
for each $(c_i,c_j)\in \calC\times S_k$
we add at most one path with less than $2k$ missing edges to $H$.
Thus, for each value of $k$ we add less than
$\size{\calC} \cdot \size{S_k} \cdot 2k = O(n^{4/3}\log^{1/3} n)$
edges to $H$, w.p.\ at least~$1-O(n^{-c+1})$.
Summing over all $O(\log n)$ values of $k$,
and adding the number of edges contributed by the clustering phase,
we conclude that $H$ has at most $O(n^{4/3}\log^{4/3} n)$ edges,
w.p.\ at least~$1-O(n^{-c+1})$.
\end{proof}

\begin{lemma}
\label{lemma: 6ap stretch}
The graph $H$ constructed by \sixap{}
satisfies $\delta_H(x,y) \leq \delta_G(x,y) +6$
for each pair $(x,y)\in V\times V$,
with probability at least~$1-O\left(n^{-c+2}\right)$.
\end{lemma}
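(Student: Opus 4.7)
The plan is to fix a pair $(x,y) \in V \times V$, show that $\delta_H(x,y) \leq \delta_G(x,y)+6$ holds with probability at least $1-O(n^{-c})$, and then union-bound over all $O(n^2)$ pairs to obtain the stated $1-O(n^{-c+2})$ bound. Throughout, I condition on the high-probability events established in the proof of Lemma~\ref{lemma: 6ap size}: every node of degree at least $n^{1/3}\log^{4/3}n$ is clustered, $\size{\calC}=O(n^{2/3}/\log^{1/3}n)$, and each $\size{S_k}$ is concentrated around its expectation. Let $P = x, v_1, \ldots, v_\ell = y$ be a shortest $(x,y)$-path in $G$ chosen to minimize the number $m$ of its edges outside $H_0$. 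If $m=0$ then $P\subseteq H$ and we are done. Otherwise, every missing edge on $P$ has both endpoints clustered (edges incident to unclustered nodes are placed in $H_0$ during clustering), so by trimming the in-$H_0$ prefix and suffix of $P$ it suffices to treat the case where $x,y$ are themselves clustered, with centers $c(x), c(y)$. Let $k^*$ be the smallest power of $2$ with $m<2k^*$; then $k^*\leq 2m$, placing $k^*$ in the algorithm's iteration range.

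The combinatorial core is a cluster-counting observation along $P$. If two clustered nodes $v_i,v_{i'}$ on $P$ with $i<i'$ share a center $c$, both are neighbors of $c$, giving $\delta_G(v_i,v_{i'})\leq 2$; since $P$ is shortest this forces $i'-i\leq 2$. Hence each distinct cluster center occupies at most three consecutive positions on $P$, and because $P$ contains at least $m+1$ clustered nodes (the endpoints of the $m$ missing edges), the set $T(P)$ of distinct cluster centers appearing on $P$ satisfies $\size{T(P)} \geq (m+1)/3 \geq k^*/6$. A standard estimate then yields
\[
\Pr\bigl[T(P) \cap S_{k^*} = \emptyset\bigr] \leq \Bigl(1 - \tfrac{8c^2\log n}{k^*}\Bigr)^{\size{T(P)}} \leq \exp\!\Bigl(-\tfrac{4c^2}{3}\log n\Bigr) \leq n^{-c},
\]
using $c>2$ in the last step.

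Conditioning on $T(P)\cap S_{k^*}\neq\emptyset$, pick $c_j$ in this intersection together with a node $v_a$ on $P$ satisfying $c(v_a)=c_j$. At iteration $k^*$, both pairs $(c(x), c_j)$ and $(c(y), c_j)$ are processed, and the candidate $v=v_a\in C_j$ is considered in each. The canonical path $c(x), x, v_1, \ldots, v_a$ has length $a+1$ and uses only the clustering edge $(c(x),x)\in H_0$ together with edges of $P[x,v_a]$; it therefore witnesses at most $m<2k^*$ missing edges on a shortest $c(x)$-$v_a$ path, and symmetrically for $c(y)$ and $v_a$. Consequently paths $P^*_x$ from $c(x)$ to some $v^*_x\in C_j$ and $P^*_y$ from $c(y)$ to some $v^*_y\in C_j$ are added to $H$, with $\size{P^*_x}\leq \delta_G(c(x),v_a)\leq a+1$ and $\size{P^*_y}\leq \delta_G(c(y),v_a)\leq (\ell-a)+1$. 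Concatenating
\[
x \to c(x) \xrightarrow{P^*_x} v^*_x \to c_j \to v^*_y \xleftarrow{P^*_y} c(y) \to y,
\]
using the cluster edges $(c(x),x), (v^*_x, c_j), (c_j, v^*_y), (c(y), y)$ in $H$, yields an $H$-path of length at most $1+(a+1)+1+1+(\ell-a+1)+1 = \ell+6$, as desired.

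The most delicate step, and the main obstacle, is verifying that the algorithm's chosen shortest $c(x)$-$v_a$ (resp.\ $c(y)$-$v_a$) path indeed carries at most $m<2k^*$ missing edges. This is immediate whenever the canonical path above is itself a shortest path; in the sub-case $\delta_G(c(x),v_a)<a+1$ one must instead exhibit an alternative shortest path whose missing-edge count is still bounded by $m$ (up to a small additive constant), typically by grafting a shortcut out of $c(x)$ onto an appropriate suffix of $P$ and arguing via the choice of $P$ as minimizing $m$. Once this technicality is resolved (and symmetrically for $c(y)$), a union bound over the $O(n^2)$ pairs $(x,y)$ delivers the stated failure probability $O(n^{-c+2})$, completing the proof.
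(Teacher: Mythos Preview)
Your argument has a genuine gap exactly where you flag it, and the proposed patch does not work. The algorithm only places $P_{v_a}$ in $A$ when $P_{v_a}$ is a \emph{shortest} $c(x)$--$v_a$ path with fewer than $2k^*$ missing edges; thus you must exhibit \emph{some shortest} $c(x)$--$v_a$ path with $<2k^*$ missing edges. Your canonical path $c(x),x,v_1,\dots,v_a$ need not be shortest, and your fallback---``grafting a shortcut out of $c(x)$ onto a suffix of $P$ and arguing via the choice of $P$ as minimizing $m$''---goes the wrong direction. Minimality of $m$ among shortest $x$--$y$ paths gives only \emph{lower} bounds on the number of missing edges of alternative shortest $x$--$y$ paths; it says nothing that upper-bounds the missing edges on an arbitrary shortest $c(x)$--$v_a$ path. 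Concretely, if $\delta_G(c(x),v_a)=a$, any shortest $c(x)$--$v_a$ path $Q$ concatenated with $P[v_a,y]$ (and the edge $(x,c(x))$) has length $\ell+1$, so it is not a shortest $x$--$y$ path and the minimality of $m$ is irrelevant; nothing prevents every shortest $c(x)$--$v_a$ path from having far more than $2k^*$ missing edges.

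The paper sidesteps this obstacle by changing the base path: after identifying the first and last clustered nodes $x',y'$ on $\rho$ with centers $c_1,c_3$, it takes $\sigma$ to be a \emph{shortest path between the centers} $c_1$ and $c_3$, defines $k'=\size{\sigma\setminus H_0}$, and performs the cluster-counting and hitting argument along $\sigma$. The payoff is that any prefix $\sigma[c_1,w]$ is automatically a shortest $c_1$--$w$ path, so the algorithm's $P_w$ (which minimizes missing edges among shortest $c_1$--$w$ paths) satisfies $\size{P_w\setminus H_0}\le\size{\sigma[c_1,w]\setminus H_0}\le k'<2k$ immediately, with no case analysis. Adopting this reformulation---counting clusters and choosing $k$ along a shortest $c(x)$--$c(y)$ path rather than along $P$---closes your gap cleanly; the rest of your write-up (the $+6$ length count and the union bound) is correct.
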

\begin{figure}
    \begin{center}
    	\hspace{-8mm}
      \includegraphics[scale=0.92]{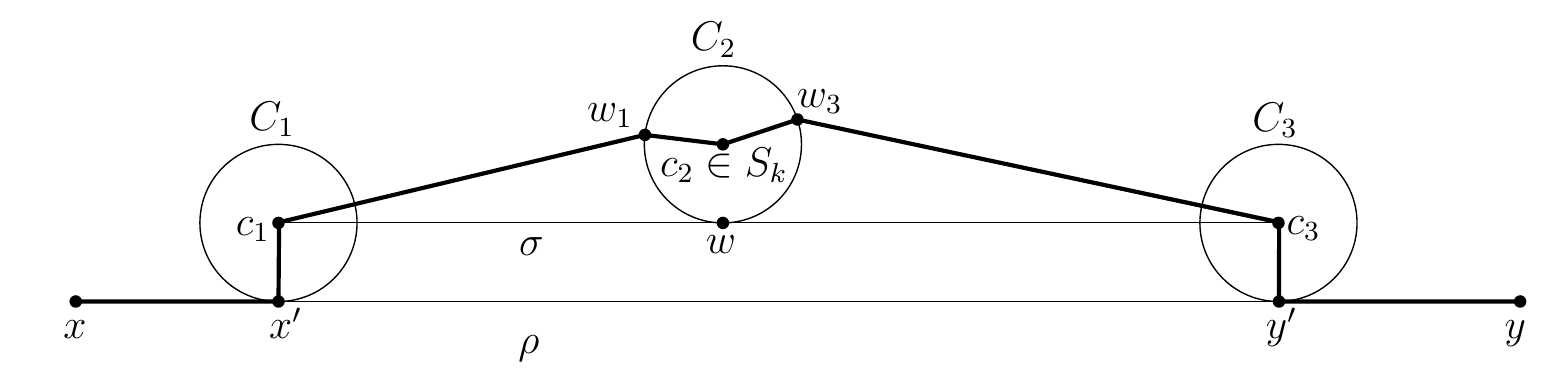}
      \caption{Illustration of the proof of Lemma~\ref{lemma: 6ap stretch}}
      \label{fig: +6 stretch}
    \end{center}
\end{figure}
\begin{proof}
%Let $H_0$ denote $H$ before the path buying phase.
%
Consider a shortest path $\rho$ in $G$ between two nodes $x,y\in V$
(see Figure~\ref{fig: +6 stretch}).
Let $x'$ and $y'$ be the first and last clustered nodes on $\rho$, respectively.
If all nodes of $\rho$ are unclustered,
then $\rho$ is fully contained in $H_0$ and we are done.

Let $c_1$ and $c_3$ be the centers of the clusters
containing $x'$ and $y'$, respectively.
Let $\sigma$ be a shortest path in $G$ between $c_1$ and $c_3$,
and denote by $k'$ the number of edges of $\sigma \setminus H_0$.
Let $k$ be the largest power of $2$ such that $k\leq k'$.

An edge can be in $\sigma \setminus H_0$
only if it connects two clustered nodes.
Hence, $k'$, the number of edges in $\sigma \setminus H_0$,
is smaller than the number of clustered nodes in $\sigma$.
On the other hand,
$\sigma$ cannot contain more than three nodes of the same cluster:
the distance between every two nodes in a cluster is at most two,
so a shortest path cannot traverse more than three nodes of the same cluster.
%so a path going through more than three nodes of the same cluster
%is not a shortest path.
Thus, the number of clusters intersecting $\sigma$
is at least $k'/3$.
As $k'/3\geq k/3$,
the probability that none of the centers of these clusters
is chosen to $S_k$ is at most
$\left( 1-\frac{8c^2 \log n}{k} \right)^{k/3}
= O\left(n^{-c^2}\right)$.
For each pair of nodes, a cluster center on a shortest path between them is chosen to $S_k$, for the appropriate value of $k$, with similar probability.
A union bound implies that this claim holds for all pairs in $V\times V$
w.p.\ at least~$1-O(n^{-c^2+2})$.

Let $w$ be a node on $\sigma$ in a cluster $C_2$ such that $c_2\in S_k$,
if such a cluster exists.
Denote by $\sigma[c_1,w]$ the sub-path of $\sigma$ from $c_1$ to $w$.
As there are $k' <2k $ edges in $\sigma \setminus H_0$,
there are also less than $2k$ edges in $\sigma[c_1,w] \setminus H_0$.
Thus, in step $3(b)$ of the path-buying phase for $k$,
either the path $\sigma[c_1,w]$ or some other path between $c_1$ and $w$ of length at most
$\delta_G(c_1,w)$ is added to $A$.
In step $3(c)$,
a path from $c_1$ to some node $w_1\in C_2$ is added to $H$,
and this is a shortest path in $A$,
so $\delta_H(c_1, w_1) \leq \delta_G(c_1,w)$.
Similarly, a shortest path from $c_3$ to some $w_3 \in C_2$ is added to $H$,
and $\delta_H(c_3, w_3) \leq \delta_G(c_3,w)$.

The path $\sigma$ is a shortest path from $c_1$ to $c_3$ in $G$,
so $\size{\sigma} \leq \delta_G(x',y') +2$.
%In step $3(c)$ of the path buying phase for $k$ and $(c_1,c_2)$,
%the sub-path of $\sigma$ from $c_1$ to $w$ is in $A$,
%so $\delta_H(c_1, w_1) \leq \delta_G(c_1,w)$.
%Similarly, $\delta_H(c_3, w_3) \leq \delta_G(c_3,w)$.
As $\delta_G(c_1,w) + \delta_G(c_3,w) = \size{\sigma}$,
we conclude $\delta_H(c_1,w_1) + \delta_H(c_3,w_3) \leq \size{\sigma} \leq \delta_G(x',y') +2$.

Consider the path from $x$ to $y$ in $H$ composed of
the sub-path of $\rho$ from $x$ to $x'$,
the edge $(x',c_1)$,
the path from $c_1$ to $w_1$,
the edges $(w_1,c_2)$ and $(c_2, w_3)$,
the path from $w_3$ to $c_3$,
the edge $(c_3,y')$,
and finally,
the sub-path of $\rho$ from $y'$ to $y$.
This is a path from $x$ to $y$ in $H$, implying
\begin{equation*}
\begin{split}
\delta_H(x,y)
& \leq
\delta_H (x,x') + 1 +\delta_H (c_1,w_1) +2 +
\delta_H (w_3,c_3) +1 +\delta_H (y',y)\\
& \leq
\delta_G (x,x') +4 +\delta_G(x',y') +2 + \delta_G (y',y)
  =
\delta_G(x,y) +6,
\end{split}
\end{equation*}
as desired.
\end{proof}

We now discuss the implementation of \sixap{}
in the \cgst{} model.

\newcommand{\LemmaSixAPComplexity}{
\sixap{} can be implemented in $O\left(\frac{n^{2/3}}{\log^{1/3}n} +D\right)$ rounds
in the \cgst{} model,
with probability at least~$1-o(n^{-c})$.
}

\begin{lemma}
\label{lemma: 6ap complexity}
\LemmaSixAPComplexity
\end{lemma}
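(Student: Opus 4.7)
The plan is to implement \sixap{} step by step in the \cgst{} model, and to show that its round complexity is dominated by a single call to the multiple WBFS construction. The clustering phase takes only $O(1)$ rounds: each node samples its membership in $\calC$ locally, announces it to its neighbors, and every non-center node either picks a cluster-center neighbor (committing the corresponding edge to $H$) or adds all of its incident edges to $H_0$. A Chernoff bound gives $|\calC|=O(n^{2/3}/\log^{1/3}n)$ w.h.p. Each node then assigns weight $0$ to every incident edge in $H_0$ and weight $1$ otherwise; this assignment is local and consistent across endpoints. Invoking Algorithm~\ref{alg: parallel weighted bfs} with source set $\calC$ constructs, by Theorem~\ref{thm: weighted bfs}, a WBFS tree $T_{c_i}$ rooted at each $c_i\in\calC$ in $|\calC|+D-1=O(n^{2/3}/\log^{1/3}n+D)$ rounds. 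Under this weighting, the weight of any path equals its number of edges outside $H_0$, so the path from $c_i$ to any $v$ in $T_{c_i}$ is exactly the path $P_v$ referenced in step~3(b)(i) of \sixap{}: a shortest $c_i$-$v$ path in $G$ with the fewest edges outside $H_0$. After the WBFS, each node $v$ knows, for every $c_i\in\calC$, the pair $(d_v^{c_i},w_v^{c_i})$ and its parent in $T_{c_i}$.

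Next, I carry out the path-buying decisions without further $\tilde O(D)$ overhead. The samples $S_k$ are drawn using private randomness at each cluster center and need no communication. Each clustered $v\in C_j$ then sends over the edge $(v,c_j)$ the $|\calC|$ triples $(c_i,d_v^{c_i},w_v^{c_i})$, one per round; across different edges this happens in parallel, taking $|\calC|$ rounds. With this information, $c_j$ locally determines, for every $k$ with $c_j\in S_k$ and every $c_i\in\calC$, the endpoint $v^*_{i,j,k}\in C_j$ (if any) prescribed by the path-buying rule. For every $c_i\in\calC$ and $v\in C_j$, $c_j$ computes the bit $e_v^{c_i}\in\{0,1\}$ that equals $1$ iff $v$ is chosen as an endpoint for $(c_i,c_j,k)$ for some $k$, and transmits these $|\calC|$ bits back to $v$ packed $\Theta(\log n)$ per message, using a further $O(|\calC|/\log n)$ rounds.

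Finally, for each $c_i\in\calC$, the edge from $v$ to its parent in $T_{c_i}$ lies in $H$ iff some endpoint for $c_i$ lies in $v$'s subtree, i.e., iff the bit $b_v^{c_i} := e_v^{c_i} \vee \bigvee_{u \text{ child of } v \text{ in } T_{c_i}} b_u^{c_i}$ equals $1$. Computing all the $b_v^{c_i}$'s amounts to an OR-convergecast in each $T_{c_i}$; running all $|\calC|$ of them in parallel with a priority-based pipelined schedule, in the spirit of Algorithm~\ref{alg: parallel weighted bfs}, completes them in $O(|\calC|+D)$ rounds, via a counting argument analogous to Lemma~\ref{lem: first entries stabilize} with tree depths replacing distances. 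This last step is the main technical obstacle: an edge of $G$ may belong to many $T_{c_i}$'s simultaneously, so congestion must be controlled, but because only one bit per $(v,c_i)$ needs to propagate, the pipelining argument closes. Summing over all phases, the total round count is $O(|\calC|+D)=O(n^{2/3}/\log^{1/3}n+D)$, with the stated probability following by a union bound over the Chernoff estimates on $|\calC|$ and on each $|S_k|$ already established in Lemma~\ref{lemma: 6ap size}.
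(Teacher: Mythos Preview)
Your proposal follows essentially the same route as the paper: clustering in $O(1)$ rounds, assigning $0/1$ weights so that a WBFS path realizes $P_v$ exactly, invoking Algorithm~\ref{alg: parallel weighted bfs} from all cluster centers in $|\calC|+D-1$ rounds, shipping the $|\calC|$ distance/weight pairs from each clustered node to its center, letting each center make the path-buying decisions locally, and finally propagating the ``buy'' decisions back up the WBFS trees.

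The only substantive difference is in that last step. You propose to realize the buy phase as an OR-convergecast in each $T_{c_i}$, pipelined ``in the spirit of Algorithm~\ref{alg: parallel weighted bfs}'' and justified by ``a counting argument analogous to Lemma~\ref{lem: first entries stabilize} with tree depths replacing distances''. This is plausible but under-specified: Lemma~\ref{lem: first entries stabilize} exploits the fact that messages are prioritized by a distance field that increases monotonically along the direction of propagation, whereas in a convergecast the analogous parameter behaves differently, and a single edge may carry bits for many sources whose depths at that edge coincide, so the congestion bound needs a genuinely new argument that you do not supply. The paper sidesteps this entirely with a simpler device: during the WBFS execution each node records in which round it received each message, and then the schedule is \emph{replayed in reverse}---if $v$ got a message regarding $c_i$ from $u$ in the $r$-before-last round of Algorithm~\ref{alg: parallel weighted bfs}, then $v$ sends ``buy $c_i$'' to $u$ in round $r$ of the buy phase. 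Since Algorithm~\ref{alg: parallel weighted bfs} sent at most one message per directed edge per round, so does the reversed schedule, and the $O(|\calC|+D)$ bound follows immediately with no new counting argument.

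One minor omission: to run Algorithm~\ref{alg: parallel weighted bfs} for the right number of rounds, all nodes must share bounds on $D$ and on $|\calC|$. The paper obtains these by first building a single BFS tree and aggregating $D'\in[D,2D)$ and $|\calC|$ along it; you should add this step.
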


\begin{proof}
For the clustering phase,
each node decides locally w.p.\ $\frac{c}{n^{1/3}\log^{1/3}n}$ to become a cluster center,
and notifies its neighbors.
Each node with a neighbor that is a cluster center now joins a cluster
by sending a message to such a neighbor
and adding the appropreate edge to the spanner.
A node with no neighboring cluster centers
notifies all its neighbors and adds all its edges to the spanner.
This is done in a constant number of rounds.

Before the path buying phase,
the nodes construct a single BFS tree,
along which they compute an upper bound $D'$ on $D$,
satisfying $D\leq D'< 2D$,
and count the number of cluster centers, $\size{\calC}$.
The nodes mark the edges of $H_0$ with weight $0$
and the other edges with weight~$1$.
Then, they construct a WBFS tree rooted at each cluster center
by executing Algorithm~\ref{alg: parallel weighted bfs}
for $\size{\calC}+D'$ many rounds.
By the proof of Lemma~\ref{lemma: 6ap size},
we have $\size{\calC}\in O\left(\frac{n^{2/3}}{\log^{1/3}n}\right)$
w.p.\ at least~$1-o(n^{-c})$,
and thus the construction of the WBFS trees takes
$O\left(\frac{n^{2/3}}{\log^{1/3}n} +D\right)$ rounds
with the same probability.

Each node $v$ now knows about a ``good'' path to each cluster center $c_i$, i.e.,
a shortest path from $c_i$ to $v$,
with a minimal number of edges not in $H$ after the clustering phase.
A node $v$ in a cluster $C_j$ notifies its neighbor $c_j$
about all the distances to other cluster centers in $\calC$
and the number of missing edges in each such path.
That is, each $v\in C_j$ sends $\size{\calC}$ messages to $c_j$,
which takes $O\left(\frac{n^{2/3}}{\log^{1/3}n}\right)$ rounds.

Each cluster center $c_j$ decides locally to join each set $S_k$
w.p.\ $\frac{8c^2\log n}{k}$.
For each other center $c_i\in \calC$,
$c_j$ locally constructs the list $A$:
for each $v\in C_j$,
$A$ contains the shortest path from $c_i$ to $v\in C_j$
found by the WBFS algorithm,
and the number of missing edges in it.
Then, $c_j$ chooses from $A$ a path from $c_i$ to some $v\in C_j$
with a minimal number of missing edges,
and if it has at most $2k$ missing edges,
$c_j$  sends a ``buy $c_i$'' message to $v$.

Finally,
all nodes simultaneously execute a ``buy'' phase,
where ``buy $c_i$'' messages are sent up the WBFS tree.
To avoid congestion,
we assume that during the execution of
Algorithm~\ref{alg: parallel weighted bfs},
each node keeps a record of the messages it got in each round
and the WBFS source each message referred to.
Each node $v$ then sends messages in reversed order:
if $v$ has a message ``buy $c_i$'',
and it got a message from $u$ regarding $c_i$ in the $r$-before-last round
of Algorithm~\ref{alg: parallel weighted bfs},
then it sends the message ``buy $c_i$'' to $u$ in round $r$ of the ``buy'' phase.
Then, $u$ adds ``buy $c_i$'' to its list of messages,
and adds the edge $(u,v)$ to the spanner.
This parts takes $O\left(\frac{n^{2/3}}{\log^{1/3}n} +D\right)$~rounds,
just like the execution of Algorithm~\ref{alg: parallel weighted bfs}.
\end{proof}

\section{Discussion and Open Questions}
\label{sec:discussion}
While we present an application of WBFS trees,
our algorithm also solves the weighted $(S,d,k)$-detection problem,
a result that could be of independent interest.

The question of finding the \emph{lightest} paths between all pairs of nodes in a graph, or computing their weights, is a fundamental question in many computational models.
In the \cgst{} model, a randomized algorithm for exactly computing these distances in $\tilde{O}(n)$ rounds was recently presented~\cite{BernsteinN19}.
The exact time complexity of computing these distances \emph{deterministically} in the \cgst{} model is still open~\cite{AgarwalR20}, and we hope our study of lightest shortest paths could facilitate future research on it.

While this paper settles the question of constructing sparse $(+6)$-spanners fast,
the study of spanner construction in distributed environments still lags behind the study of sequential spanner construction algorithms.
In the field of purely additive spanners,
we still do not have fast algorithms,
e.g., for the construction of sparse $(+0)$-pairwise spanners (a.k.a.\ pairwise preservers) and $(+6)$-pairwise spanners.

A more intriguing question
is proving time lower bounds for the construction of spanners in the \cgst{} model:
while $\Omega(D)$ rounds are known to be necessary~\cite{Pettie10},
lower bounds that depend on other parameters of the graph or the spanners
exist only for pairwise spanners~\cite{Censor-HillelKP18}.
Finding a lower bound for the construction of all-pairs spanners in the \cgst{} model, which does not depend on $D$, is still an open question.
Such a lower bound could show that the $n^{3/2}$ term in the time bound of our construction is inevitable,
or motivate the design of faster algorithms for the problem.

\paragraph{Acknowledgements}
We thank Shiri Chechik and Pierre Fraigniaud
for discussions regarding $(+6)$-spanners,
and the reviewers of TCS journal for discussions on consistent shortest paths.
This project has received funding from the European Union's Horizon 2020 Research And Innovation Program under grant agreement no.755839, and from the Israel Science Foundation (grant 1696/14). Ami Paz was supported by the Fondation Sciences Math\'ematiques de Paris (FSMP).

\let\OLDthebibliography\thebibliography
\renewcommand\thebibliography[1]{
	\OLDthebibliography{#1}
	\setlength{\parskip}{0pt}
	\setlength{\itemsep}{0pt plus 0.3ex}
}

\bibliographystyle{plain}
\bibliography{Bibliography}
\end{document}